\newtheorem{theorem}{Theorem}[section]
\newtheorem{lemma}[theorem]{Lemma}
\newtheorem{proposition}[theorem]{Proposition}
\newtheorem{definition}[theorem]{Definition}
\newtheorem{remark}[theorem]{Remark}
\newcommand{\algorithmicbreak}{\textbf{break}}
\newcommand{\Break}{\algorithmicbreak}
\global\long\def\E{\mathbb{E}}%
\global\long\def\M{\mathcal{M}}%
\global\long\def\I{\mathcal{I}}%
\global\long\def\B{\mathcal{B}}%
\global\long\def\tO{\tilde{O}}%
\global\long\def\TO{\tilde{O}_{\varepsilon}}%
\title{Subquadratic Submodular Maximization \\ with a General Matroid Constraint\thanks{
The authors thank the three anonymous reviewers for their valuable comments.
This work was partially supported by the joint project of Kyoto University and Toyota Motor Corporation, titled ``Advanced Mathematical Science for Mobility Society'', by JST ERATO Grant Number JPMJER2310, and by JSPS
KAKENHI Grant Numbers JP20K11692, JP22H05001, JP24KJ1494, and JP24K02901.
}}
\author{Yusuke Kobayashi\thanks{
Research Institute for Mathematical Sciences, Kyoto University.
E-mail: \{yusuke, ttatsuya\}@kurims.kyoto-u.ac.jp
}
\and Tatsuya Terao\footnotemark[2]
}
\date{}
\begin{document}

\maketitle

\begin{abstract}
We consider fast algorithms for monotone submodular maximization with a general matroid constraint. We present a randomized $(1 - 1/e - \epsilon)$-approximation algorithm that requires $\tilde{O}_{\epsilon}(\sqrt{r} n)$ independence oracle and value oracle queries, where $n$ is the number of elements in the matroid and $r \leq n$ is the rank of the matroid. This improves upon the previously best algorithm by Buchbinder-Feldman-Schwartz [Mathematics of Operations Research 2017] that requires $\tilde{O}_{\epsilon}(r^2 + \sqrt{r}n)$ queries.

Our algorithm is based on continuous relaxation, as with other submodular maximization algorithms in the literature. To achieve subquadratic query complexity, we develop a new rounding algorithm, which is our main technical contribution. The rounding algorithm takes as input a point represented as a convex combination of $t$ bases of a matroid and rounds it to an integral solution. Our rounding algorithm requires $\tilde{O}(r^{3/2} t)$ independence oracle queries, while the previously best rounding algorithm by Chekuri-Vondr\'{a}k-Zenklusen [FOCS 2010] requires $O(r^2 t)$ independence oracle queries. A key idea in our rounding algorithm is to use a directed cycle of arbitrary length in an auxiliary graph, while the algorithm of Chekuri-Vondr\'{a}k-Zenklusen focused on directed cycles of length two. 
\end{abstract}

\section{Introduction} \label{sec:intro}

\subsection{Submodular Maximization}
\label{subsec:submodularmax}

%%Submodular maximization is one of the most central problems in theoretical computer science.
Submodular maximization is a fundamental and well-studied problem in theoretical computer science and combinatorial optimization.
This is because a number of important problems can be regarded as special cases of submodular maximization, including maximum coverage, generalized assignment, and facility location.
% this problem provides a unifying framework that captures many well-known problems such as maximum cut, generalized assignment, maximum coverage, and facility location.
Furthermore, submodular maximization has many practical applications in machine learning, economics, and many other areas.
In the submodular maximization problem, the input consists of 
a (monotone) submodular set function $f \colon 2^V \to \mathbb{R}_+$ 
and a feasible region ${\cal F} \subseteq 2^V$ specified by some constraints, 
and the aim is to find a set $S \in {\cal F}$ maximizing $f(S)$. 
%%see Section~\ref{} for the formal description of the problem.  

The study of submodular maximization was initiated by a seminal work of Fisher, Nemhauser, and Wolsey in the 1970's \cite{nemhauser1978analysis, fisher1978analysis, nemhauser1978best}.
% They showed that the greedy algorithm achieves a $1 - 1/e$ approximation for a cardinality constraint and a $1/2$ approximation for a matroid constraint.
They showed that, for monotone submodular maximization, the greedy algorithm achieves $(1 - 1/e)$-approximation for a cardinality constraint and $\frac{1}{2}$-approximation for a matroid constraint.
The advantage of their algorithm is that it is very simple and fast, indeed, it runs in quadratic time.
% For monotone submodular maximization with a cardinality constraint, there is n
It is known that unless $\rm{P} = \rm{NP}$, for any $\varepsilon > 0$, there is no $(1 - 1/e + \varepsilon)$-approximation algorithm for the maximum coverage problem \cite{feige1998threshold}, which is a special case of monotone submodular maximization with a cardinality constraint or a matroid constraint.
Thus, the factor $1 - 1/e$ is optimal for a cardinality constraint.
% , and thus the factor $1 - 1/e$ is known to be optimal \cite{nemhauser1978best, feige1998threshold}.

To obtain an optimal $(1 - 1/e)$-approximation algorithm for a matroid constraint, Calinescu-Chekuri-P\'{a}l-Vondr\'{a}k~\cite{calinescu2011maximizing} developed a framework based on continuous optimization and rounding technique.
In their algorithm, they first solve the continuous optimization problem of maximizing the multilinear extension $F$ of $f$, a natural continuous extension of $f$.
%By using a continuous greedy algorithm, they achieve a $1 - 1/e$ approximation for the continuous optimization problem.
By using a continuous greedy algorithm, they obtain a $(1 - 1/e)$-approximation solution for the continuous optimization problem.
In order to round the obtained fractional solution to an integral one, they use a variant of the pipage rounding technique of Ageev-Sviridenko~\cite{ageev2004pipage}. 
% and obtain a $1 - 1/e$ approximation for maximizing $F$ by using continues greedy approach.
% first use continuous greedy approach to obtain 
% maximizing the multilinear extension $F$ of $f$, a natural continuous extension of $f$, subjecto to the matroid polytope
% In order to round its fractional solution to an integral one, they use pipage rounding technique
Consequently, their algorithm achieves the optimal $(1 - 1/e)$-approximation.
%
% recent fast algorithms for a matroid constraint \cite{badanidiyuru2014fast, buchbinder2017comparing, ene2018towards, henzinger2023faster} rely on this framework
%
Note that, although their algorithm runs in polynomial time, its running time is very high. 

%%Unfortunately, their algorithm has a very high running time,
Since submodular maximization has a number of applications, 
providing efficient approximation algorithms is a fundamental task both in theory and in practice. 
Thus, it has received considerable attention to develop fast
submodular maximization algorithms 
% algorithms for the monotone submodular maximization problem 
that achieve an approximation close to the optimal factor, 
typically with an approximation factor of $1 - 1/e - \varepsilon$ for any $\varepsilon > 0$.
%%(typically, $1 - 1/e - \varepsilon$ approximation for $\varepsilon > 0$).

In the submodular maximization problem with a general matroid constraint, 
it is standard to suppose that the objective function $f$ is given as a value oracle, 
and the feasible region ${\cal F} \subseteq 2^V$ is given as an independence oracle of a matroid. 
In such a case, the efficiency of an algorithm is 
usually measured by the number of value and independence oracle queries used in it. 

Badanidiyuru-Vondr\'{a}k~\cite{badanidiyuru2014fast} presented a fast algorithm that achieves an almost optimal approximation factor $1 - 1/e - \varepsilon$, for any $\varepsilon > 0$, for a matroid constraint.
% They developed a new accelerated continuous greedy algorithm for a matroid constraint, which uses $O\left( \frac{rn}{\varepsilon^4} \log^2 \left( \frac{n}{\varepsilon} \right) \right)$ value oracle queries and $O\left( \frac{n}{\varepsilon^2} \log \left( \frac{n}{\varepsilon} \right) + \frac{r^2}{\varepsilon} \right)$ independence oracle queries, where $n$ is the number of elements in the matroid and $r$ is the rank of the matroid.
Their algorithm uses $O\left( \frac{rn}{\varepsilon^4} \log^2 \left( \frac{n}{\varepsilon} \right) \right)$ value oracle queries and $O\left( \frac{n}{\varepsilon^2} \log \left( \frac{n}{\varepsilon} \right) + \frac{r^2}{\varepsilon} \right)$ independence oracle queries, where $n$ is the number of elements in the matroid and $r$ is the rank of the matroid.
To achieve this query complexity, 
%%obtain faster running time, 
they developed a fast implementation of the continuous greedy algorithm that uses $\TO(rn)$ value oracle queries and $\TO(n)$ independence oracle queries.\footnote{The $\tilde{O}_{\varepsilon}$ notation hides polylogarithmic factors in $n$ and polynomial factors in $\varepsilon^{-1}$.}
The output of their continuous greedy algorithm is
a fractional solution represented as a convex combination of $1 / \varepsilon$ bases.
Then, they apply the swap rounding algorithm of Chekuri-Vondr\'{a}k-Zenklusen~\cite{chekuri2010dependent} to round the obtained fractional solution to an integral solution, 
which requires $O(r^2 / \varepsilon)$ independence oracle queries. 
%%; see Section~\ref{sec:preimplement} for details of the swap rounding algorithm.

%%In the fast continuous greedy algorithm of Badanidiyuru-Vondr\'{a}k, it outputs a fractional solution represented as a convex combination of $1 / \varepsilon$ bases.
%%Then the swap rounding algorithm of Chekuri-Vondr\'{a}k-Zenklusen requires $O(r^2 / \varepsilon)$ independence oracle queries to round it to an integral solution.

% The number of independence oracle queries in their accelerated continuous greedy algorithm is divided into two pars, 

Buchbinder-Feldman-Schwartz~\cite{buchbinder2017comparing} presented a $(1 - 1/e - \varepsilon)$-approximation algorithm that has a trade-off between the number of value oracle queries and the number of independence oracle queries used in the algorithm.
%%To achieve this, 
In their algorithm, they combine a variant of the residual random greedy algorithm of Buchbinder-Feldman-Naor-Schwartz~\cite{buchbinder2014submodular}  and the fast continuous greedy algorithm of Badanidiyuru-Vondr\'{a}k described above.
Then, for a parameter $\lambda \in [1, r]$, their algorithm uses $\TO(r \lambda + \frac{rn}{\lambda})$ value oracle queries and $\TO(\lambda n + r^2)$ independence oracle queries.
We note that the $\TO(r^2)$ term in the independence query complexity is due to the rounding algorithm in the same way as the algorithm of Badanidiyuru-Vondr\'{a}k.
% Their algorithm is better than the algorithm of Badanidiyuru-Vondr\'{a}k~\cite{badanidiyuru2014fast} when $r = o(n)$.
If we evaluate the algorithm by the total number of queries regardless of their types, 
then the query complexity is minimized when $\lambda = \Theta(\sqrt{r})$. 
In this case, their algorithm uses $\tilde{O}_{\varepsilon}(r^2 + \sqrt{r}n)$ value and independence oracle queries.
This query complexity is better than that of Badanidiyuru-Vondr\'{a}k~\cite{badanidiyuru2014fast} when $r = o(n)$, 
but a quadratic number of queries is still required when $r$ is large. 

%%However, when $r = \Theta(n)$, their algorithm requires $\TO(n^2)$ queries.

% Consider the case where both value and independence oracles have the same running time.
% In this case, we evaluate the running time by the total number of queries regardless of their types.
% Their algorithm uses $\tilde{O}_{\varepsilon}(r^2 + \sqrt{r}n)$ value and independence oracle queries.
% Their algorithm is better than the algorithm of Badanidiyuru-Vondr\'{a}k~\cite{badanidiyuru2014fast} when $r = o(n)$.
% However, when $r = \Theta(n)$, their algorithm requires $\TO(n^2)$ queries.

% Another line of work initiated by Ene-Nguy$\tilde{{\hat{\text{e}}}}$n~\cite{ene2018towards} focuses on developing a nearly linear time algorithm for some important classes of matroids given in explicit representation.

% Buchbinder-Feldman-Schwart also presented a $1 - 1/e - \varepsilon$ approximation algorithm for a partition matroid constraint 

Recently, for several important classes of matroids, 
faster algorithms for monotone submodular maximization with a matroid constraint have been investigated. 
%Recently, there are some studies on fast algorithms for monotone submodular maximization with a matroid constraint for some important matroids.
Ene-Nguy$\tilde{{\hat{\text{e}}}}$n~\cite{ene2018towards} presented a $(1 - 1/e - \varepsilon)$-approximation algorithm for graphic matroid and partition matroid constraints in time nearly-linear in the size of their representation.
Henzinger-Liu-Vondr\'{a}k-Zheng~\cite{henzinger2023faster} presented a $(1 - 1/e - \varepsilon)$-approximation algorithm for laminar matroid and transversal matroid constraints in nearly-linear time.
%%To achieve this, they developed 
A key ingredient in these algorithms is a fast dynamic data structure for maintaining an (approximate) maximum weight basis of the matroid.

\subsection{Our Results}
\label{subsec:our_contribution}

This paper focuses on the monotone submodular maximization problem 
with a general matroid constraint. 
In the problem, this input consists of 
a monotone submodular set function $f \colon 2^V \to \mathbb{R}_+$ given as a value oracle, and 
a matroid $\M = (V, \I)$ given as an independence oracle. 
The objective is to find an independent set $S \in \I$ that maximizes $f(S)$.  
For $\alpha \in [0, 1]$, a randomized algorithm is said to be an {\em $\alpha$-approximation algorithm} 
if it returns a solution $S \in \I$ with $\E[f(S)] \ge \alpha \cdot \max \{ f(T) \mid T \in \I \}$.  
%where ${\sf opt}$ is the optimal value of the problem. 
%%Since most of the algorithms are randomized in the context of submodular maximization, 
A randomized algorithm is often called simply an algorithm throughout the paper.  
Our main result is to give a first $(1 - 1/e - \varepsilon)$-approximation algorithm for this problem 
that requires a subquadratic number of queries. 
Recall that $n = |V|$ and $r$ is the rank of $\M$.

\begin{theorem} \label{main_submodular_max}
For any $\varepsilon > 0$, there is a randomized algorithm that achieves $(1 - 1/e - \varepsilon)$-approximation for maximizing a monotone submodular function subject to a matroid constraint and uses $O(\sqrt{r} n \text{ \textup{poly}}(1/\varepsilon, \log n))$ value and independence oracle queries.
\end{theorem}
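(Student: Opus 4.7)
The plan is to combine the continuous greedy framework of Buchbinder-Feldman-Schwartz~\cite{buchbinder2017comparing}, parameterized by $\lambda \in [1,r]$, with a new rounding algorithm that breaks the $O(r^2 t)$ barrier of Chekuri-Vondr\'ak-Zenklusen swap rounding~\cite{chekuri2010dependent}. Recall that their algorithm yields a fractional solution with $(1 - 1/e - \varepsilon)$-approximation guarantee represented as a convex combination of $t = O(1/\varepsilon)$ bases, using $\TO(r\lambda + rn/\lambda)$ value queries and $\TO(\lambda n)$ independence queries for the continuous part, with an additional $O(r^2 t)$ independence queries for rounding. Setting $\lambda = \Theta(\sqrt{r})$ balances the continuous terms to $\TO(\sqrt{r}n)$, leaving the $r^2$ rounding term as the sole obstacle to subquadratic total complexity.

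First I would take the continuous phase of Buchbinder-Feldman-Schwartz as a black box: its approximation guarantee and query complexity carry over unchanged, so all that remains is to replace the rounding step. My target is a rounding algorithm that takes a convex combination of $t$ bases and returns a random basis contained in the support, preserves the multilinear extension value in expectation (in the sense of the usual pipage/swap-rounding martingale analysis so that the $(1 - 1/e - \varepsilon)$ guarantee transfers), and uses only $\tilde{O}(r^{3/2} t)$ independence queries. Plugging in $t = O(1/\varepsilon)$ and $\lambda = \Theta(\sqrt{r})$ would then give $\TO(\sqrt{r}n + r^{3/2})$ independence and $\TO(\sqrt{r}n + r^{3/2})$ value queries, and since $r \le n$ implies $r^{3/2} \le \sqrt{r}\cdot n$, both reduce to $O(\sqrt{r}n \cdot \mathrm{poly}(1/\varepsilon,\log n))$.

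The hard part, and the main technical content to be filled in, is the construction of the new rounding algorithm. I would build an auxiliary directed graph on the ground set whose arcs encode valid exchanges between pairs of bases in the current convex combination, so that executing an exchange along any directed cycle in this graph corresponds to a legal merge of fractional weights that preserves the martingale property of pipage rounding. The previous $O(r^2 t)$ swap rounding processes one length-two cycle at a time and pays $\Omega(r)$ independence queries per swap; by contrast, the plan is to discover a single directed cycle of arbitrary length and process it as one batched exchange, amortizing the $\Omega(r)$ cost of an independence check over many elements. Establishing that such cycles always exist when the current solution is not yet integral, that the batched exchange is a valid pipage-rounding step, and that an appropriately designed search data structure brings the amortized cost per eliminated fractional variable down to $\tilde{O}(\sqrt{r})$, is the main obstacle and the step I expect to dominate the work. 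Once this rounding routine is in place, the query bound in Theorem~\ref{main_submodular_max} follows immediately by the arithmetic above.
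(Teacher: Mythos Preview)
Your high-level reduction is exactly the paper's: run the Buchbinder--Feldman--Schwartz continuous phase with $\lambda=\Theta(\sqrt{r})$, then replace the Chekuri--Vondr\'ak--Zenklusen rounding by a procedure costing $\tilde O(r^{3/2}t)$ independence queries; the arithmetic $r^{3/2}\le \sqrt{r}\,n$ finishes the bound. You also correctly identify the auxiliary exchange digraph and cycles of arbitrary length as the mechanism.

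Where your plan diverges from the paper is the \emph{reason} long cycles help. You describe the gain as ``process it as one batched exchange, amortizing the $\Omega(r)$ cost \ldots\ over many elements,'' i.e.\ a cycle of length $2\ell$ buys $\ell$ swaps for the price of one detection. That is not what the paper does, and it is not clear it can be made to work: swapping all pairs along the cycle simultaneously need not yield a basis (only single-edge exchanges are guaranteed by the digraph's definition), and even if it did, the martingale step the paper uses requires that each update change at most one coordinate up and one down, which a batched multi-swap violates. In the paper, each cycle found still triggers exactly \emph{one} random swap (a uniformly chosen edge of the cycle, weighted by $\beta_1,\beta_2$), so $|B_1\setminus B_2|$ drops by one per cycle, just as in classical swap rounding.

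The actual speedup is in \emph{finding} a cycle, not in exploiting its length. A strongly exchangeable pair (length-two cycle) costs $\Omega(r)$ queries to locate in the worst case, but an arbitrary-length cycle can be found in $\tilde O(\sqrt{r})$ queries: sample $\tilde O(\sqrt{r})$ vertices from each side, use the binary-search exchange primitive (one $O(\log r)$-query call per sampled vertex) to find an incoming edge for each; if every sampled vertex has one, trace backwards to a cycle; if some vertex $a$ has none, then with high probability $a$ has in-degree $\tilde O(\sqrt{r})$ in the full digraph, and one can enumerate its in-neighbours by repeated binary search to find a length-two cycle through $a$ in $\tilde O(\sqrt{r})$ queries. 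This gives $\tilde O(\sqrt{r})$ per update and $\tilde O(r^{3/2}t)$ overall. So your target complexity and framework are right, but the ``amortization via long cycles / data structure'' intuition should be replaced by ``sampling plus binary search makes any-cycle detection cheap.''
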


It is worth mentioning that, for the case of $r = \Theta(n)$, our algorithm uses $\TO(n^{3/2})$ oracle queries, whereas the algorithm of Buchbinder-Feldman-Schwartz~\cite{buchbinder2017comparing} uses $\TO(n^2)$ oracle queries.

Our algorithm is based on continuous relaxation
and rounding technique in the same way as previous algorithms~\cite{badanidiyuru2014fast,buchbinder2017comparing,calinescu2011maximizing}. 
%%Badanidiyuru-Vondr\'{a}k~\cite{badanidiyuru2014fast}
In this framework, currently, the bottleneck of the query complexity comes from the rounding algorithm. 
Indeed, the swap rounding algorithm by 
Chekuri-Vondr\'{a}k-Zenklusen~\cite{chekuri2010dependent} 
requires $O(r^2 t)$ independence oracle queries 
if the input point is represented as a convex combination of $t$ bases of the matroid. 
Then, this rounding algorithm requires a quadratic number of independence oracle queries even when 
$t$ is small. 
%%$t = \tilde O_{\varepsilon}(1)$. 
Therefore, 
in order to break the quadratic-independence-query barrier in this framework, 
it is necessary to devise a faster rounding algorithm. 

The key technical contribution of this paper is to 
develop a new rounding algorithm that uses $o(r^2t)$ independence oracle queries. 

\begin{restatable}{theorem}{rounding}
\label{thm:main_rounding}
For any $\varepsilon > 0$, there is a randomized algorithm satisfying the following conditions: 
\begin{itemize}
    \item the input consists of a matroid $\M = (V, \I)$ given as an independence oracle and 
    a point $x$ in the base polytope of $\M$ 
    %%$x \in B(\M)$ 
    that is represented as a convex combination of $t$ bases, 
    \item the output is a basis $S$ of $\M$ such that $\E[f(S)] \geq (1 - \varepsilon) F(x)$ for any submodular function $f\colon 2^V \to \mathbb{R}$ and its multilinear extension $F$, and 
    \item it uses $O(r^{3/2}t \log^{3/2}(\frac{r t}{\varepsilon}) )$ independence oracle queries.  
\end{itemize}  
\end{restatable}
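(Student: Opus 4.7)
My plan is to follow the swap rounding framework of Chekuri--Vondr\'{a}k--Zenklusen (CVZ) and redesign its per-merge subroutine so that exchange pairs are discovered through long cycles in an auxiliary graph. At the top level, with $x = \sum_{i=1}^t \lambda_i \mathbf{1}_{B_i}$, perform $t-1$ pairwise merges: each merge combines $B_1, B_2$ with weights $\lambda_1, \lambda_2$ into a single random basis with weight $\lambda_1+\lambda_2$ via a sequence of elementary swaps. An elementary swap uses an exchange pair $(e, f)$ with $e \in B_1 \setminus B_2$, $f \in B_2 \setminus B_1$ and both $B_1 - e + f,\, B_2 - f + e \in \I$, and flips a $\lambda$-weighted coin to update one of $B_1$ or $B_2$, preserving coordinate-wise marginals. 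Since $F$ is convex along $\mathbf{1}_e - \mathbf{1}_f$ by submodularity, each swap does not decrease $\E[F]$, and the CVZ-style analysis over the $O(rt)$ total swaps yields $\E[f(S)] \ge (1-\varepsilon) F(x)$.

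The key novelty is to discover exchange pairs in bulk using directed cycles in an auxiliary bipartite exchange graph $G$ on $B_1 \triangle B_2$: an arc $e \to f$ whenever $B_1 - e + f \in \I$, and $f \to e$ whenever $B_2 - f + e \in \I$. A length-two cycle in $G$ corresponds exactly to a CVZ exchange pair, so the gain must come from longer cycles. The plan is to prove a combinatorial lemma that a directed cycle of length $2k$ in $G$ yields $k$ valid elementary swaps to be carried out sequentially with no further independence queries: after executing the first $j$ swaps dictated by the cycle, the $(j+1)$-st arc still certifies a matroid exchange for the updated pair $(B_1^{(j)}, B_2^{(j)})$. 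This will be established by induction along the cycle, leveraging the alternation of $B_1$-feasible and $B_2$-feasible arcs together with the fact that the fundamental circuit witnessing a later arc is disjoint from the elements modified by earlier arcs, so that the exchange axiom carries through.

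For query complexity, the graph $G$ is built incrementally: each explored vertex costs $O(r)$ independence queries and reveals its entire set of outgoing arcs via a single fundamental-circuit computation. A birthday-paradox-type argument shows that after $\tilde{O}(\sqrt{r})$ random vertex explorations, the discovered portion of $G$ contains a directed cycle of length $\Omega(\sqrt{r})$ with high probability, so the $O(r)$ elementary swaps per merge are handled with $\tilde{O}(r^{3/2})$ queries; aggregated over $t$ merges and boosted by standard repetition, the total is $O(r^{3/2} t \log^{3/2}(rt/\varepsilon))$. \textbf{The main obstacle} is the combinatorial lemma on sequential feasibility of cycle-derived exchanges: the length-two case is immediate from Brualdi's symmetric exchange, but longer cycles demand a careful inductive argument that the fundamental-circuit certificate of each later arc survives all preceding modifications of $B_1$ and $B_2$; a secondary difficulty is balancing the $\varepsilon$ error so that the polylogarithmic overhead in the query budget absorbs the failure probability of cycle-finding across all merges.
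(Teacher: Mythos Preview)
Your plan diverges from the paper at precisely the step you flag as the main obstacle, and the divergence is fatal rather than merely technical. You propose to extract $k$ CVZ-style elementary swaps from a single length-$2k$ cycle in the exchange graph, where each swap requires a \emph{strongly} exchangeable pair (both $B_1-e+f\in\I$ and $B_2-f+e\in\I$). But the arcs of the cycle only certify one direction each: the arc $(u_i,v_i)$ says $B_1-u_i+v_i\in\I$ while the next arc $(v_i,u_{i+1})$ says $B_2-v_i+u_{i+1}\in\I$, involving a \emph{different} $u$. There is no reason for $(u_i,v_i)$ to be strongly exchangeable, and the ``fundamental-circuit disjointness'' argument you sketch cannot manufacture the missing reverse direction. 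Even if by luck the first pair were strongly exchangeable, once you execute that swap the bases change, and the remaining arcs of the cycle were certified only for the \emph{original} $B_1,B_2$; there is no induction that keeps them valid. So the combinatorial lemma as stated is simply false in general.

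The paper sidesteps this entirely: from a cycle $u_0,v_0,\dots,v_{l-1}$ it performs \emph{one} swap, not $l$. With probability $\beta_2/(\beta_1+\beta_2)$ it picks a uniformly random $i$ and sets $B_1\gets B_1+v_i-u_i$; with the remaining probability it picks a random $i$ and sets $B_2\gets B_2+u_{i+1}-v_i$. Because the indices telescope around the cycle, $\E[\beta_1\mathbf{1}_{B_1'}+\beta_2\mathbf{1}_{B_2'}]=\beta_1\mathbf{1}_{B_1}+\beta_2\mathbf{1}_{B_2}$, and only one coordinate increases and one decreases, so the CVZ convexity lemma applies. Thus the role of the cycle is not to supply many swaps, but to supply one swap \emph{without} needing strong exchangeability. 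Your query accounting is correspondingly off: with one swap per cycle you need $O(r)$ cycles per merge, so the cycle must be found in $\tilde O(\sqrt{r})$ queries, not $\tilde O(r^{3/2})$. This is achieved not by exploring full $O(r)$-size neighbourhoods, but by sampling $\tilde O(\sqrt{r})$ vertices and using the binary-search primitive \texttt{FindExchangeElement} (costing $O(\log r)$ queries per vertex) to detect one incoming arc each; any sampled vertex with no incoming arc in the sample has, with high probability, small in-degree in the full graph, and a length-two cycle through it is then found in $\tilde O(\sqrt{r})$ further queries. Without this $O(\log r)$-per-vertex ingredient, your budget of $O(r)$ queries per explored vertex gives $\tilde O(r^{3/2})$ per cycle and hence $\tilde O(r^2)$ per merge, which is no improvement over CVZ.
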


% \begin{theorem} \label{thm:fast_swap_rounding}
% For any $\varepsilon > 0$, there is a randomized algorithm satisfying the following conditions: 
% \begin{itemize}
%     \item the input consists of a matroid $\M = (V, \I)$ given as an independence oracle and 
%     a point $x$ in the base polytope of $\M$ 
%     %%$x \in B(\M)$ 
%     that is represented as a convex combination of $t$ bases, 
%     \item the output is a basis $S$ of $\M$ such that $\E[f(S)] \geq (1 - \varepsilon) F(x)$ for any submodular function $f\colon 2^V \to \mathbb{R}$ and its multilinear extension $F$, and 
%     \item it uses $O(r^{3/2}t \log^{3/2}(\frac{r t}{\varepsilon}) )$ independence oracle queries.  
% \end{itemize}
% \end{theorem}

By combining this theorem with 
the submodular maximization algorithm by Buchbinder-Feldman-Schwartz~\cite{buchbinder2017comparing}, 
we obtain Theorem~\ref{main_submodular_max}; see Section~\ref{sec:submodular_maximization} for details. 

%We note that 
We also show that if the matroid is given as a rank oracle instead of an independence oracle, then 
we obtain a $(1 - 1/e - \varepsilon)$-approximation algorithm using $\TO(n + r^{3/2})$ value and rank oracle queries. 
%; see Appendix~\ref{sec:other_oracle}. 
%%for details. 

\begin{restatable}{theorem}{ranksubmodularmax} \label{rank_submodular_max}
For any $\varepsilon > 0$, there is a randomized algorithm that achieves $(1 - 1/e - \varepsilon)$--approximation for maximizing a monotone submodular function subject to a matroid constraint and uses $O((n + r^{3/2}) \text{ \textup{poly}}(1/\varepsilon, \log n))$ value and rank oracle queries.
\end{restatable}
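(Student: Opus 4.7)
The plan is to reduce Theorem~\ref{rank_submodular_max} to Theorem~\ref{main_submodular_max} via a rank-oracle preprocessing step that shrinks the ground set. Concretely, using $\tilde{O}(n)$ rank queries together with $O(n)$ singleton-marginal value queries, I would construct a subset $V^\star \subseteq V$ of size $\tilde{O}_{\varepsilon}(r)$ such that the restricted instance on $V^\star$ retains a $(1-\varepsilon/2)$-approximately optimal independent set for the original problem. I then invoke the algorithm of Theorem~\ref{main_submodular_max} on the restricted instance, using rank queries to simulate independence queries in one step each. Since the reduced ground set has size $n' = \tilde{O}_{\varepsilon}(r)$, that algorithm contributes $\tilde{O}_{\varepsilon}(\sqrt{r}\cdot n') = \tilde{O}_{\varepsilon}(r^{3/2})$ queries of both kinds; combined with the $\tilde{O}(n)$ cost of the sparsification, the total is $\tilde{O}_{\varepsilon}(n + r^{3/2})$, and the approximation factors compose as $(1-\varepsilon/2)(1-1/e-\varepsilon/2) \ge 1 - 1/e - \varepsilon$ after rescaling.

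For the sparsifier, a natural candidate for $V^\star$ is the union of $\tilde{\Theta}(1/\varepsilon^{2})$ bases of $\mathcal{M}$ produced by Edmonds' greedy under independent random orderings (or orderings biased by the singleton marginals $f(\{e\})$). Each basis costs $\tilde{O}(n)$ rank queries after sorting, so the aggregate cost is $\tilde{O}_{\varepsilon}(n)$ rank queries and the union has size $\tilde{O}_{\varepsilon}(r)$. Correctness is established by a coupling/exchange argument: for any fixed optimum $S^\star$, with high probability every element of $S^\star$ appears in the union, and monotone submodularity bounds the contribution of any missing elements by at most an $\varepsilon/2$ fraction of the optimum; the singleton marginals are used only to guide the random orderings and to prune elements $e$ with negligibly small $f(\{e\})$ that cannot alter the optimum by more than $\varepsilon/(2n)$.

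The main obstacle is proving that such an almost $f$-oblivious, rank-only sparsifier preserves the optimum of every monotone submodular $f$ up to $(1-\varepsilon/2)$, since one cannot afford more than $O(1)$ value queries per element and the restriction must succeed simultaneously for all admissible $f$. If the direct union-of-bases bound is too lossy, a fallback plan is to avoid sparsification entirely and instead inline a dynamic basis into the fast continuous greedy of Badanidiyuru--Vondr\'{a}k: maintain a maximum-weight basis with respect to the evolving marginal weights, and update it by local swaps in $\tilde{O}(\sqrt{r})$ rank queries each, reusing the arbitrary-length exchange-cycle machinery that underlies Theorem~\ref{thm:main_rounding}. Because the basis has only $r$ elements and an amortized analysis limits the total number of swaps across all $\tilde{O}_{\varepsilon}(r)$ continuous-greedy iterations to $\tilde{O}_{\varepsilon}(r)$, this alternative yields $\tilde{O}_{\varepsilon}(r^{3/2})$ queries for all swap operations, plus $\tilde{O}(n)$ queries for initialization, matching the claimed bound.
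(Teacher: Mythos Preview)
Your Plan~A has a genuine gap that you yourself flag but do not close. A union of $\mathrm{poly}(1/\varepsilon)$ bases built from random (or singleton-marginal-biased) orderings need not contain a near-optimal independent set. Already for the rank-$r$ uniform matroid on $n$ elements, a greedy basis under a random ordering is a uniformly random $r$-subset; a fixed optimum element is covered with probability $r/n$, so coupon-collector arguments force $\Theta((n/r)\log r)$ bases, not $\mathrm{poly}(1/\varepsilon)$, before the optimum is contained. Pruning by singleton marginals cannot repair this: submodular optima are driven by joint marginals, and one can make all singleton values equal while the optimum depends on a specific $r$-subset. So the sparsification step, as stated, fails.

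Your Plan~B is closer in spirit to what the paper actually does, but two pieces are off. First, the plain Badanidiyuru--Vondr\'ak continuous greedy spends $\tilde O_{\varepsilon}(rn)$ \emph{value} queries, not $\tilde O_{\varepsilon}(n)$; you need the Ene--Nguy\~{\^e}n variant (Lemma~\ref{submodular_max_lemma_rank}) to get the value-query cost down to $\tilde O_{\varepsilon}(n)$, and this variant is exactly what abstracts the matroid work into ``maximum-weight-basis data-structure operations.'' Second, you propose to implement each such operation in $\tilde O(\sqrt r)$ rank queries by reusing the exchange-cycle machinery of Theorem~\ref{thm:main_rounding}; that machinery is tailored to \emph{merging two bases under an expectation-preserving random swap}, not to recomputing a max-weight basis after a single weight decrease. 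The correct---and much simpler---implementation in the rank-oracle model is one call to \texttt{FindFreeElement} (Lemma~\ref{lem:binary_search_rank}) on $B-v$, which locates the best free replacement in $O(\log n)$ rank queries. This is exactly Lemma~\ref{lem:implementdatastructure}, giving $\tilde O(1)$ rank queries per operation. The paper then applies Theorem~\ref{thm:main_rounding} for rounding, yielding $\tilde O_{\varepsilon}(n)+\tilde O_{\varepsilon}(r)+\tilde O_{\varepsilon}(r^{3/2})=\tilde O_{\varepsilon}(n+r^{3/2})$ overall. In short: drop the sparsifier, use Ene--Nguy\~{\^e}n's continuous greedy, implement its basis updates with rank-oracle binary search in $\tilde O(1)$ each, and round via Theorem~\ref{thm:main_rounding}.
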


\subsection{Overview of Our Rounding Algorithm}
\label{subsec:overview}

% ==================

% {\bf Moved to Somewhere.} 

% \subparagraph*{Faster submodular maximization}

% Badanidiyuru-Vondr\'{a}k~\cite{badanidiyuru2014fast}

% Buchbinder-Feldman-Schwartz~\cite{buchbinder2017comparing}

% a fractional solution represented as a convex combination of $1/\varepsilon$ bases

% In order to round the fractional solution to a integral solution, 

% ==================

% recent fast algorithms for monotone submodular maximization with a matroid constraint \cite{badanidiyuru2014fast, buchbinder2017comparing, ene2018towards, henzinger2023faster} rely on continuous greedy approach

In this subsection, 
we give a technical overview of our new rounding algorithm. 
Since our rounding algorithm is based on that of Chekuri-Vondr\'{a}k-Zenklusen~\cite{chekuri2010dependent}, 
we first review their algorithm
and then explain the key ideas behind ours.

\paragraph*{Swap Rounding Algorithm of Chekuri-Vondr\'{a}k-Zenklusen.}
The rounding algorithm by Chekuri-Vondr\'{a}k-Zenklusen is called the swap rounding algorithm. 
Their algorithm takes as input a point $x$ represented as a convex combination of $t$ bases of $\M$ and returns an integral solution $S$ such that $\E [f(S)] \geq F(x)$ for any submodular function $f$ and its multilinear extension $F$.
%%The rounding algorithm developed by Chekuri-Vondr\'{a}k-Zenklusen, called the swap rounding algorithm, 
%%requires $O(r^2 t)$ independence oracle queries.
In each phase of the algorithm, we pick up two bases in the representation of $x$ and merge them into a single basis.
By applying this procedure $t - 1$ times, we obtain a single basis of $\M$.

In order to merge two bases, say $B_1$ and $B_2$, 
their swap rounding algorithm uses a {\em strongly exchangeable pair} of elements, that is, 
a pair of elements $u \in B_1 \setminus B_2$ and $v \in B_2 \setminus B_1$ such that $B_1 + v - u \in \I$ and $B_2 + u - v \in \I$.
Since we can find a strongly exchangeable pair using $O(r)$ independence oracle queries 
and we need to find such a pair $O(rt)$ times in the algorithm, 
the total number of queries is $O(r^2 t)$. 
%
% They find such a pair of elements using $O(r)$ independence oracle queries in the following way: fix an element $u \in B_1 \setminus B_2$ arbitrarily and check the conditions for each element $v \in B_2 \setminus B_1$ one by one.
% It is not clear whether we can develop an algorithm for finding such a pair of elements $u$ and $v$ using $o(r)$ independence oracle queries.
It is still not clear whether we can develop an algorithm for finding a strongly exchangeable pair using $o(r)$ independence oracle queries, 
and hence their algorithm is now stuck at $\Omega(r^2 t)$ independence oracle queries.

See Section~\ref{sec:preimplement} for details of the swap rounding algorithm of Chekuri-Vondr\'{a}k-Zenklusen.

\paragraph*{Our Faster Rounding Algorithm.}

We develop a new rounding algorithm that requires $\tilde{O}(r^{3/2} t)$ independence oracle queries with high probability.
Our rounding algorithm is based on that of Chekuri-Vondr\'{a}k-Zenklusen 
in a sense that we update bases by swapping a pair of elements $O(rt)$ times. 
Therefore, in each step of our algorithm, we need to update some basis 
by using only $\tilde{O}(\sqrt{r})$ independence oracle queries. 
To achieve this, we need substantially new ideas. 

% Our rounding algorithm is based on that of Chekuri-Vondr\'{a}k-Zenklusen, but 
% we need substantially new ideas to achieve the subquadratic query complexity. 
% Our rounding algorithm is based on that of Chekuri-Vondr\'{a}k-Zenklusen.
% In order to obtain a faster rounding algorithm, we introduce substantially new ideas.

First, we introduce a digraph that represents exchangeability of the elements in the matroid (see Definition \ref{def:auxiliary_directed_graph}), 
and provide a new interpretation of the swap rounding algorithm of Chekuri-Vondr\'{a}k-Zenklusen using this auxiliary graph.  
% we provide a new interpretation of the rounding algorithm of Chekuri-Vondr\'{a}k-Zenklusen using an auxiliary graph; see Definition \ref{def:auxiliary_directed_graph} for the definition.
% In term of this auxiliary graph, one phase of their algorithm can be described as follow: it finds a directed cycle of length two in the auxiliary graph and update two bases using this directed cycle.
%%Indeed, each step of their algorithm can be described as follows: it finds a directed cycle of length two in the auxiliary graph and update two bases using this directed cycle.
Indeed, each step of their algorithm can be seen as an update 
using a directed cycle of length two in the auxiliary graph. 
This interpretation motivates us to focus on a directed cycle of arbitrary length in the auxiliary graph
instead of a directed cycle of length two.  
%%Second, 
By extending the argument of Chekuri-Vondr\'{a}k-Zenklusen, 
we show that we can appropriately update bases using a directed cycle of arbitrary length in the auxiliary graph.

%%Second, we show that we can appropriately update bases using a directed cycle of arbitrary length in the auxiliary graph.

% To obtain a rounding algorithm faster than the one of Chekuri-Vondr\'{a}k-Zenklusen,
Second, we show that we can find a directed cycle in the auxiliary graph using $o(r)$ independence oracle queries with high probability, 
which is the most technical part in our argument. 
To achieve this, we combine sampling technique and binary search technique. 
In our algorithm for finding a directed cycle in the auxiliary graph, 
%%we first sample a set $L$ (resp. $R$) of size $\tilde{O}(\sqrt{r})$ from $B_1 \setminus B_2$ (resp. $B_2 \setminus B_1$).
%%Let $D'$ be the subgraph of the auxiliary graph induced by $L \cup R$.
we first sample $\tilde{O}(\sqrt{r})$ vertices, and 
define $D'$ as the subgraph induced by the sampled vertex set. 
If every vertex in $D'$ has an incoming edge, then we can easily find a directed cycle in $D'$ by traversing such directed edges in the opposite direction. 
Otherwise, by using a vertex with no incoming edge, 
we find a directed cycle of length two using $\tilde{O}(\sqrt{r})$ independence oracle queries with high probability.
We can check whether each vertex in $D'$ has an incoming edge or not using $\tilde{O}(1)$ independence oracle queries
with the aid of the binary search technique 
proposed by Nguy{\~{\^{e}}}n~\cite{nguyen2019note} and Chakrabarty-Lee-Sidford-Singla-Wong~\cite{chakrabarty2019faster}; see Lemma \ref{lem:binary_search_ind} for details. 
%allows us to find an exchangeable element using $\tO(1)$ independence oracle queries; see Lemma \ref{lem:binary_search_ind} for details.
Note that this technique was 
used in recent studies on fast matroid intersection \cite{nguyen2019note, chakrabarty2019faster, blikstad2021breaking, blikstad2021breaking_STOC, u2022subquadratic} and matroid partition \cite{terao2023faster} algorithms.
Therefore, we obtain an algorithm that finds a directed cycle using $\tO(\sqrt{r})$ independence oracle queries with high probability.

In our rounding algorithm, we update bases using a directed cycle in the auxiliary graph repeatedly. 
Since we update bases $O(rt)$ times and each update requires $\tO(\sqrt{r})$ independence oracle queries, 
the total number of independence oracle queries is $\tilde{O}(r^{3/2} t)$ with high probability.

\subsection{Related Work} \label{subsec:related_work}
%%\subsection{Additional Related Work} \label{subsec:related_work}

We have mentioned several recent studies on fast submodular maximization 
with matroid constraints in Section~\ref{subsec:submodularmax}. 
Other than these, there are a lot of studies on fast submodular maximization algorithms in the literature \cite{azar2012efficient, badanidiyuru2014fast, chekuri2015multiplicative, li2022submodular, filmus2012tight, ene2017nearly, mirzasoleiman2015lazier}. 

Badanidiyuru-Vondr\'{a}k~\cite{badanidiyuru2014fast} developed a $(1 - 1/e - \varepsilon)$-approximation algorithm using $O(\frac{n}{\varepsilon} \log (\frac{n}{\varepsilon}))$ value oracle queries for the cardinality constraint.
Mirzasoleiman-Badanidiyuru-Ashwinkumar-Karbasi-Vondr{\'a}k-Krause~\cite{mirzasoleiman2015lazier} developed a $(1 - 1/e - \varepsilon)$-approximation algorithm using $O(n \log (1 / \varepsilon))$ value oracle queries for the cardinality constraint.
Ene-Nguy$\tilde{{\hat{\text{e}}}}$n~\cite{ene2017nearly} developed a $(1 - 1/e - \varepsilon)$-approximation algorithm using $(1/\varepsilon)^{O(1/\varepsilon^4)} n \log^2 n$ value oracle queries for the knapsack constraint.
Filmus-Ward~\cite{filmus2012tight} presented a combinatorial $(1 - 1/e)$-approximation algorithm for monotone submodular maximization with a matroid constraint, which uses $O(n^7r^2)$ oracle queries.
They also obtain a $(1 - 1/e - O(\varepsilon))$-approximation algorithm that uses $O(\varepsilon^{-3} n^4 r)$ value oracle queries and $O(\varepsilon^{-1}n^2r \log n)$ independence oracle queries.

%There are also many studies on fast submodular maximization algorithms also developed

Studies on fast submodular maximization algorithms have developed also in the direction of parallelized settings \cite{balkanski2018adaptive, balkanski2019optimal, ene2019submodular, ene2019submodular_STOC2019, chekuri2019parallelizing}, distributed settings \cite{barbosa2016new, liu_et_al:OASIcs.SOSA.2019.18}, and dynamic settings \cite{chen2022complexity, banihashem2024dynamic, lattanzi2020fully, monemizadeh2020dynamic}.

Chekuri-Quanrud-Torres~\cite{chekuri2021fast} developed a fast swap rounding algorithm for graphic matroid constraints to obtain fast approximation algorithms for the Bounded Degree MST problem and the Crossing Spanning Tree problem.

\subsection{Paper Organization} \label{subsec:organization}

The remaining of this paper is organized as follows. 
In Section~\ref{sec:preliminaries}, we give some preliminaries. 
In Section~\ref{sec:submodular_maximization}, 
we show how to derive Theorem \ref{main_submodular_max} from 
our fast rounding algorithm in Theorem~\ref{thm:main_rounding}. 
In Section \ref{sec:preimplement}, 
we describe the swap rounding algorithm by Chekuri-Vondr\'{a}k-Zenklusen~\cite{chekuri2010dependent} in detail,
because it is the basis of our rounding algorithm. 
% Our rounding algorithm is based on that of Chekuri-Vondr\'{a}k-Zenklusen~\cite{chekuri2010dependent}, 
% whose details are shown in Section \ref{sec:preimplement}. 
In Section~\ref{sec:faster_rounding_algorithm}, 
we describe our fast rounding algorithm and prove Theorem~\ref{thm:main_rounding}, 
which is the main technical part of this paper. 
In Section~\ref{sec:other_oracle}, 
we discuss the rank oracle setting and prove 
Theorem \ref{rank_submodular_max}.

\section{Preliminaries} \label{sec:preliminaries}

% \subsection{Basic Definitions and Notation} \label{subsec:basec_def_and_notation}

\paragraph*{Basic Notation.}
% For a positive integer $a$, we denote $[a] = \{1, \dots , a\}$.
% For a finite set $X$, let $\# X$ and $|X|$ denote the cardinality of $X$, which is also called the {\em size} of $X$. 
Let $\mathbb{R}_{+}$ denote the set of non-negative real numbers. 
Throughout this paper, let $V$ be a finite set and let $n$ denote its cardinality. 
For a set $A \subseteq V$ and an element $v \in V$, 
we will often write $A + v := A \cup \{ v \}$ and $A - v := A \setminus \{ v \}$.
% For two sets $A, B \subseteq V$, we write $A + B := A \cup B$ and $A - B := A \setminus B$,
% when no confusion can arise.
For two sets $A, B \subseteq V$, their symmetric difference 
is denoted by $A \triangle B := (A \setminus B) \cup (B \setminus A)$. 
For $A \subseteq V$, the {\em characteristic vector} of $A$ is defined as the vector $x \in \{0, 1\}^V$ with $x_v = 1$ for $v \in A$ and $x_v = 0$ for $v \in V \setminus A$. 
We will denote by $\mathbf{1}_{A}$ the characteristic vector of $A$.
For $v \in V$, we will write $\mathbf{1}_{v} := \mathbf{1}_{\{ v \}}$.

\paragraph*{Submodular Functions and Multilinear Extension.}
Let $f \colon 2^V \rightarrow \mathbb{R}_{+}$ be a set function on a finite ground set $V$ of size $n$.
The function is \emph{submodular} if $f(A) + f(B) \geq f(A \cup B) + f(A \cap B)$ for any two subsets $A, B \subseteq V$. 
The function is \emph{monotone} if $f(A) \leq f(B)$ for any subsets $A \subseteq B \subseteq V$.
In this paper, we only consider monotone submodular functions.

%\subparagraph*{Multilinear Extension.}
For a function $f \colon 2^V \rightarrow \mathbb{R}_{+}$,
we define its {\em multilinear extension} $F \colon [0, 1]^V \rightarrow \mathbb{R}_{+}$ by
\begin{equation*}
F(x) = \sum_{S \subseteq V} f(S) \prod_{v \in S} x_v \prod_{v \in V \setminus S} (1 - x_v)
\end{equation*}
for $x \in [0, 1]^V$. 
Note that this value is equal to $\E[f(R(x))]$,
where $R(x)$ is a random set that contains each element $v \in V$ independently with probability $x_v$.
In particular, $F(\mathbf{1}_{S}) = f(S)$ for any $S\subseteq V$.

\paragraph*{Matroids.}
A pair $\M = (V, \mathcal{I})$ of a finite set $V$ and a non-empty set family $\mathcal{I} \subseteq 2^{V}$ is called a {\em matroid} if the following properties are satisfied.

\begin{description}
\item[(Downward closure property)] 
if $S \in \mathcal{I}$ and $S' \subseteq S$, then $S' \in \mathcal{I}$.

\item[(Augmentation property)]
if $S, S' \in \mathcal{I}$ and $|S'| < |S|$, then there exists $v \in S \setminus S'$ such that $S' + v \in \mathcal{I}$.
\end{description}

A set $S \subseteq V$ is called {\em independent} if $S \in \mathcal{I}$ and {\em dependent} otherwise.
The {\em rank} of $\M$ is defined as the size of a largest independent set.
In addition, for a subset $S \subseteq V$, the rank of $S$ is defined as the size of a largest independent set contained in $S$.
Inclusionwise maximal independent sets are called {\em bases}.
Note that every basis has the same size. 
For an independent set $S \in \I$, let $\M/S = (V \setminus S, \I')$ be the matroid obtained by contracting $S$ in $\M$, that is, $S' \in \I'$ if and only if $S' \cup S \in \I$.

%\subparagraph*{Strong Basis Exchange.}
Let $\B$ be the set of all bases of a matroid $\M = (V, \mathcal{I})$ and 
let $B, B' \in \B$ be two bases.
It is well-known that, for any $u \in B \setminus B'$, there exists $v \in B' \setminus B$ such that $B - u + v \in \B$ and $B' - v + u \in \B$ (see e.g.,~\cite[Theorem 39.12]{schrijver2003combinatorial}). 
This property is called {\em strong basis exchange property} of matroids. 

%\subparagraph*{Matroid Polytopes.}

Let $\M = (V, \I)$ be a matroid whose rank function and basis family 
are denoted by $r_{\M}$ and $\B$, respectively. 
The {\em matroid polytope} $P(\M)$  
is defined as the convex hull of the characteristic vectors of all the independent sets of $\M$. The {\em matroid base polytope} $B(\M)$ is defined as the convex hull of the characteristic vectors of all the bases of $\M$.
It is well-known that $P(\M)$ and $B(\M)$ are described as follows (see e.g.,~\cite[Section 40.2]{schrijver2003combinatorial}): 
\begin{align*}
P(\M) &:= \text{conv}\{ \mathbf{1}_{I} \mid I \in \I \} = \left\{x \in \mathbb{R}_+^V \; \middle| \; \sum_{v \in S} x_v \leq r_{\M}(S) \text{ for any } S \subseteq V  \right\}, \\
B(\M) &:= \text{conv}\{ \mathbf{1}_{B} \mid B \in \B \} = \left\{x \in P(\M) \; \middle| \; \sum_{v \in V} x_v = r_{\M}(V) \right\}.
\end{align*}

\paragraph*{Oracles.}
When we consider the submodular maximization problem, 
we assume that the submodular function $f$ is given as a value oracle, which takes as input any subset $S \subseteq V$ and outputs $f(S)$.
We also assume that we access a matroid $\M$ through an oracle.
Given a subset $S \subseteq V$, an {\em independence oracle} outputs whether $S \in \I$ or not.
Given a subset $S \subseteq V$, a {\em rank oracle} outputs the rank of $S$, i.e., the size of a largest independent set contained in $S$.
Note that the rank oracle is more powerful than the independence oracle, since one query of the rank oracle can determine whether a given subset is independent or not.

\begin{comment}
\subparagraph*{Dynamic Oracles.}

Recently, Blikstad-Mukhopadhyay-Nanongkai-Tu~\cite{blikstad2023fast} introduced a new oracle model called {\em dynamic oracle}.

\begin{itemize}
 \item  {\sc Insert}$(v, i)$: Create a new set $S_{k + 1} = S_{i} + v$ and increment $k$ by one.
 \item  {\sc Delete}$(v, i)$: Create a new set $S_{k + 1} = S_{i} - v$ and increment $k$ by one.
 \item  {\sc Query}$(i)$: Answer a query that takes $S_i$ as input.
\end{itemize}
\end{comment}

\paragraph*{Binary Search Technique.}

For a matroid $\M = (V, \I)$, an independent set $S \in \I$, an element $u \in V \setminus S$, and $T \subseteq S$, 
we consider a procedure that finds
an element $v \in T$ with $S + u - v \in \I$ if one exists. 
Chakrabarty et al.~\cite{chakrabarty2019faster} and Nguy{\~{\^{e}}}n~\cite{nguyen2019note} independently 
proved that this procedure can be implemented efficiently 
by using the binary search technique in the independence oracle model.
Their result is formally described as follows. 
% , 
% and see \cite[Section 3]{chakrabarty2019faster} for a proof.

% Chakrabarty et al.~\cite{chakrabarty2019faster} proved that the following procedure can be implemented efficiently by using the binary search technique in the independence oracle model.
% (This was developed independently by Nguy$\tilde{{\hat{\text{e}}}}$n~\cite{nguyen2019note}.)
% Given a matroid $\M = (V, \I)$, an independent set $S \in \I$, an element $u \in V \setminus S$, and $T \subseteq S$, the objective is to find
% an element $v \in T$ that is exchangeable with $u$ (i.e., $S + u - v \in \I$) or conclude there is no such an element.
% We skip the proof in this paper; see \cite[Section 3]{chakrabarty2019faster} for a proof.

\begin{lemma}[\cite{nguyen2019note,chakrabarty2019faster}]\label{lem:binary_search_ind}
There is an algorithm \textup{\texttt{FindExchangeElement}} which, 
given a matroid $\M = (V, \mathcal{I})$, an independent set $S \in \mathcal{I}$, an element $u \in V \setminus S$, and $T \subseteq S$, finds an element $v \in T$ such that $S + u - v \in \mathcal{I}$ or otherwise determines that no such element exists, and uses $O(\log |T|)$ independence oracle queries.
\end{lemma}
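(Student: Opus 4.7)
The plan is to perform binary search on $T$. The central structural observation is that for any nonempty $T' \subseteq S$, the following two statements are equivalent: (a) there exists $v \in T'$ with $S + u - v \in \I$, and (b) $(S \setminus T') + u \in \I$. Since (b) can be tested by a single independence oracle query, this equivalence immediately suggests repeatedly halving the candidate set.

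To justify the equivalence, I split into two cases. If $S + u \in \I$, both (a) and (b) hold trivially by downward closure whenever $T' \neq \emptyset$. Otherwise $S + u$ contains a unique circuit $C$, the fundamental circuit of $u$ with respect to $S$; it is standard that $u \in C$ and that $S + u - v \in \I$ iff $v \in C$ for each $v \in S$. Thus (a) is equivalent to $T' \cap C \neq \emptyset$. On the other hand, $(S \setminus T') + u$ is a subset of $S + u$ containing $u$, so it is independent iff it does not contain $C$, which (using $u \in C$ and $T' \subseteq S$) is equivalent to $C \cap T' \neq \emptyset$. Hence (a) and (b) coincide.

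The algorithm then first queries whether $(S \setminus T) + u \in \I$; if not, it reports that no valid $v$ exists, which is correct by the equivalence applied to $T' = T$. Otherwise, it maintains a current candidate set $T'$, initially $T$, with invariant $(S \setminus T') + u \in \I$. At each iteration, it splits $T' = T_1 \cup T_2$ with $|T_1|, |T_2| \le \lceil |T'|/2 \rceil$, queries whether $(S \setminus T_1) + u \in \I$, and replaces $T'$ by $T_1$ in the affirmative case and by $T_2$ otherwise. The invariant is preserved in the negative case because the equivalence together with the previous invariant yields $T' \cap C \subseteq T_2$ and $T' \cap C \neq \emptyset$, so $T_2 \cap C \neq \emptyset$, i.e., $(S \setminus T_2) + u \in \I$. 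When $|T'| = 1$, the sole element is returned.

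Each iteration uses one oracle query and halves $|T'|$, so together with the initial feasibility test, the total number of queries is $O(\log |T|)$, matching the claimed bound. The main point to verify carefully is the equivalence in the case $S + u \notin \I$, since that is where the uniqueness of the fundamental circuit in matroids enters the argument; the binary search itself is routine once the equivalence is in hand.
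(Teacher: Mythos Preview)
Your proof is correct. Note that the paper does not actually include a proof of this lemma; it is stated as a known result with citations to Nguy{\~{\^{e}}}n and Chakrabarty--Lee--Sidford--Singla--Wong. Your argument is precisely the standard binary search from those references: the key equivalence between ``some $v \in T'$ is exchangeable'' and ``$(S \setminus T') + u \in \I$'' via the fundamental circuit, followed by halving. There is nothing to add.
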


%%\section{Submodular Maxmization Algorithm except Rounding} \label{sec:submodular_maximization}
\section{Submodular Maxmization Algorithm (Proof of Theorem \ref{main_submodular_max})} \label{sec:submodular_maximization}

In this section, we give a proof of Theorem \ref{main_submodular_max} by combining 
the algorithm of Buchbinder-Feldman-Schwartz~\cite{buchbinder2017comparing}
and 
our rounding algorithm in Theorem~\ref{thm:main_rounding}. 
Note that a proof of Theorem~\ref{thm:main_rounding} is given in Section~\ref{sec:faster_rounding_algorithm} later. 

% Section \ref{subsec:our_contribution}

% Buchbinder-Feldman-Schwartz~\cite{buchbinder2017comparing} presented an algorithm satisfying the following conditions:
% \begin{itemize}
% \item  With probability $1 - O(\varepsilon)$, the algorithm outputs a set $S$ obeying:
% \begin{itemize}
% \item For some value $c = O(\frac{r}{\lambda \varepsilon})$, for every independent set $S'$ of the matroid $\M/S$, $\sum_{u \in S'}(f(S+u)-f(S)) \leq c \cdot f(OPT)$.
% \item Let $OPT'$ be an independent set of $\M/S$ maximizing $f(OPT')$. Then $\E[f(OPT')] \geq (1 - O(\varepsilon)) \cdot f(OPT)$.
% \end{itemize}
% \item It uses $\TO(r \lambda  + n)$ value oracle queries and $\TO(\lambda n)$ independence oracle queries.
% \end{itemize}

For monotone submodular maximization with a matroid constraint,
Buchbinder-Feldman-Schwartz presented 
a $(1 - 1/e - \varepsilon)$-approximation algorithm that has a trade-off between the number of value oracle queries 
and the number of independence oracle queries used in the algorithm. 
The main part of their algorithm is to solve 
the continuous relaxation of the submodular maximization problem efficiently. 
%Their algorithm solves the continuous relaxation of the problem, and then apply a rounding algorithm. 

Let $\lambda \in [1, r]$ be a parameter that controls the trade-off. 
%%The trade-off is represented by a parameter $\lambda \in [1, r]$. 
%%Their algorithm is parameterized by $\lambda \in [1, r]$ that represents the trade-off.
In their algorithm for solving the continuous relaxation problem, 
they first apply a variant of the residual random greedy algorithm of Buchbinder-Feldman-Naor-Schwartz~\cite{buchbinder2014submodular}.
This residual random greedy algorithm outputs $S \subseteq V$ and uses $\TO(r \lambda  + n)$ value oracle queries and $\TO(\lambda n)$ independence oracle queries; 
see \cite[Lemma 3.3]{buchbinder2017comparing}.
Then they apply a variant of the fast continuous greedy algorithm of Badanidiyuru-Vondr\'{a}k~\cite{badanidiyuru2014fast}.
This continuous greedy algorithm outputs a point $x'$ represented as a convex combination of $1/\varepsilon$ bases of $\M/S$ and uses $\TO(\frac{rn}{\lambda})$ value oracle queries and $\TO(n)$ independence oracle queries; see \cite[Corollary 3.1]{buchbinder2017comparing}.
Then $x = \mathbf{1}_S \vee x'$ is an approximate solution for the continuous relaxation problem, 
which can be represented as a convex combination of $1/\varepsilon$ bases of $\M$.   
Here, for vectors $y$ and $z$, let $y \vee z$ denote the vector such that $(y \vee z)_i = \max \{ y_i, z_i \}$ for all $i$.

Overall, Buchbinder-Feldman-Schwartz~\cite{buchbinder2017comparing}
presented an efficient algorithm for solving the continuous relaxation problem, 
which is formally stated as follows.

%\begin{theorem}[{\cite[Claim 4.4]{badanidiyuru2014fast}}, {\cite[Corollary 3.1]{buchbinder2017comparing}, \cite[Lemma 3.3]{buchbinder2017comparing}}] \label{submodular_max_lemma_independence}
\begin{theorem}[{follows from \cite[Corollary 3.1]{buchbinder2017comparing} and \cite[Lemma 3.3]{buchbinder2017comparing}}] \label{submodular_max_lemma_independence}
Given a non-negative monotone submodular function $f \colon 2^{V} \rightarrow \mathbb{R}_+$, a matroid $\M = (V, \I)$ of rank $r$, and parameters $\varepsilon > 0$ and $\lambda \in [1, r]$,
there is an algorithm satisfying the following conditions: 
\begin{itemize}
\item the algorithm outputs a point $x \in B(\M)$ represented as a convex combination of $1/\varepsilon$ bases such that 
$\E[F(x)] \geq (1 - 1/e - \varepsilon) \cdot \max \{ f(T) \mid T \in \I \}$ holds, where $F \colon [0, 1]^V \rightarrow \mathbb{R}_+$ is the multilinear extension of $f$,
%%\item this point $x$ is represented as a convex combination of $1/\varepsilon$ bases of $\M$, 
\item it uses $\displaystyle O\left( r \lambda + \frac{rn}{\lambda \varepsilon^5} \log^2 \left(\frac{n}{\varepsilon} \right) \right)$ value oracle queries, and
\item it uses $\displaystyle O\left( \frac{\lambda n}{\varepsilon^2} \log \left( \frac{n}{\varepsilon} \right) \right)$ independence oracle queries.
\end{itemize}
\end{theorem}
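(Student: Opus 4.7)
The plan is to reproduce the two-phase construction of Buchbinder-Feldman-Schwartz. The overall algorithm treats $\lambda$ as a knob that trades off value oracle work against independence oracle work: for the first $\lambda$-many ``integral'' steps we invest heavily in independence queries to pin down a random partial solution $S$, and then we run a fast continuous greedy on the contracted matroid $\M/S$ that uses only $\tilde{O}(n)$ independence queries but spends the bulk of the value-oracle budget. Combining the two outputs gives a convex combination of $1/\varepsilon$ bases of $\M$ that achieves the claimed fractional guarantee.

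First I would invoke the residual random greedy algorithm of Buchbinder-Feldman-Naor-Schwartz (as adapted in \cite[Lemma 3.3]{buchbinder2017comparing}). This algorithm runs for a prescribed number of rounds controlled by $\lambda$; in each round it builds a maximum-weight independent extension of the current set using the marginal gains of $f$ as weights, and then picks a uniformly random element of that extension to add to $S$. The analysis shows that after this phase we have an independent set $S \in \I$ such that any subsequent continuous-greedy solution on $\M/S$ composes with $S$ to give the $(1-1/e-\varepsilon)$ factor in expectation. The implementation uses $\tilde O(r\lambda+n)$ value queries and $\tilde O(\lambda n)$ independence queries, where the independence-query cost is dominated by computing the maximum-weight extension in each round via matroid-greedy style calls.

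Next I would run, on the contracted matroid $\M/S$, the threshold-based fast continuous greedy of Badanidiyuru-Vondr\'ak, again in the form adapted by Buchbinder-Feldman-Schwartz (\cite[Corollary 3.1]{buchbinder2017comparing}). The continuous greedy is discretized into $1/\varepsilon$ time steps; in each step a decreasing threshold sweep identifies a high-marginal-gain basis $B_i$ of $\M/S$ using lazy updates (estimated via sampling $F$) and a single pass that needs only $\tilde O(n)$ independence queries per step. The output is $x' = \varepsilon \sum_{i=1}^{1/\varepsilon} \mathbf{1}_{B_i}$, a convex combination of $1/\varepsilon$ bases of $\M/S$, using $\tilde O(rn/\lambda)$ value queries and $\tilde O(n)$ independence queries in total. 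Setting $x = \mathbf{1}_S \vee x'$ then yields a convex combination of $1/\varepsilon$ bases of $\M$ (each summand being $S$ together with a basis of $\M/S$), and by the composition lemma of Buchbinder-Feldman-Schwartz we get $\E[F(x)] \geq (1-1/e-\varepsilon)\cdot \max_{T\in\I} f(T)$.

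The main obstacle, and where the proof concentrates its effort, is the simultaneous control of (i) the approximation guarantee and (ii) the query budgets. For (i), one must verify that the residual random greedy ``warm start'' interacts correctly with continuous greedy on the contracted matroid so that the losses compose to at most $1/e+\varepsilon$; this is the content of the Buchbinder-Feldman-Schwartz composition analysis, which I would quote rather than re-derive. For (ii), the delicate part is the fast continuous greedy implementation: one has to show that threshold decreases, sampling-based estimation of marginals, and lazy updates together achieve $\tilde O(rn/\lambda)$ value queries and $\tilde O(n)$ independence queries per run, with all the $\mathrm{poly}(1/\varepsilon,\log n)$ factors consistent with the bounds in the statement. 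Adding up the two phases' budgets then gives $O(r\lambda + \frac{rn}{\lambda\varepsilon^5}\log^2(n/\varepsilon))$ value queries and $O(\frac{\lambda n}{\varepsilon^2}\log(n/\varepsilon))$ independence queries, matching the theorem.
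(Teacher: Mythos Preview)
Your proposal is correct and follows essentially the same route as the paper: the paper likewise presents this theorem as the two-phase Buchbinder--Feldman--Schwartz construction, first running the residual random greedy of \cite[Lemma~3.3]{buchbinder2017comparing} to obtain $S$ with $\TO(r\lambda+n)$ value and $\TO(\lambda n)$ independence queries, then running the fast continuous greedy of \cite[Corollary~3.1]{buchbinder2017comparing} on $\M/S$ to obtain $x'$ with $\TO(rn/\lambda)$ value and $\TO(n)$ independence queries, and setting $x=\mathbf{1}_S\vee x'$. The paper adds only one clarification you might incorporate: it explicitly notes (in the Remark following the theorem) that the $O(r^2/\varepsilon)$ term appearing in \cite[Theorem~1.1]{buchbinder2017comparing} comes solely from the swap rounding step and is therefore absent from the independence-query count for the continuous relaxation itself.
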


Suppose that $x \in B(\M)$ is a point as in Theorem~\ref{submodular_max_lemma_independence}. 
In the submodular maximization algorithm of Buchbinder-Feldman-Schwartz, 
they round $x$ to an integral solution 
with the aid of 
the swap rounding algorithm of Chekuri-Vondr\'{a}k-Zenklusen~\cite{chekuri2010dependent}  
%%$S$ with $\E[f(S)] \ge f(x)$ 
using $O(r^2 / \varepsilon)$ independence oracle queries.
Therefore, their entire algorithm requires $\TO(r \lambda + \frac{rn}{\lambda})$ value oracle queries 
and $\TO(\lambda n + r^2)$ independence oracle queries.

\begin{remark}
 Theorem \ref{submodular_max_lemma_independence} is not explicitly stated in the paper by Buchbinder-Feldman-Schwartz~\cite{buchbinder2017comparing}, 
 because 
 they do not separately evaluate the query complexity for solving the continuous relaxation problem and 
 for the rounding algorithm. 
 Indeed, they just state that 
 the entire algorithm requires $O( \frac{r^2}{\varepsilon} + \frac{\lambda n}{\varepsilon^2} \log \left( \frac{n}{\varepsilon} \right) )$ independence oracle queries; 
 see \cite[Theorem 1.1]{buchbinder2017comparing}.
 The $O(\frac{r^2}{\varepsilon})$ term in this query complexity comes from \cite[Corollary 3.1]{buchbinder2017comparing}, 
 which states that the continuous greedy algorithm together with the rounding algorithm requires 
 $O(\frac{n}{\varepsilon^{2}} \log ( \frac{n}{\varepsilon}) + \frac{r^2}{\varepsilon})$ independence oracle queries.  
%% where the number of independence oracle queries in \cite[Corollary 3.1]{buchbinder2017comparing} is 
%% which uses $O(\frac{n}{\varepsilon^{2}} \log (n / \varepsilon) + \frac{r^2}{\varepsilon})$ independence oracle queries in our notation. 
 This corollary is a direct consequence of \cite[Claim 4.4]{badanidiyuru2014fast}, whose proof shows that  
 the $O(\frac{r^2}{\varepsilon})$ term comes from the rounding algorithm, while
 the $O(\frac{n}{\varepsilon^{2}} \log ( \frac{n}{\varepsilon}))$ term comes from the continuous greedy algorithm. 
 Therefore, the $O(\frac{r^2}{\varepsilon})$ term is not needed to solve the continuous relaxation problem. 
\end{remark}

% Therefore, 
% %%by combining Lemma~\ref{submodular_max_lemma_independence} and the swao rounding algorithm of Chekuri-Vondr\'{a}k-Zenklusen, 
% we obtain a $(1 - 1/e - \varepsilon)$-approximation algorithm for the submodular maximization problem 
% using $\TO(r \lambda + \frac{rn}{\lambda})$ value oracle queries and $\TO(\lambda n + r^2)$ independence oracle queries. 

% Then we obtain a $(1 - 1/e - \varepsilon)$-approximation algorithm 
% using $\TO(r \lambda + \frac{rn}{\lambda})$ value oracle queries and $\TO(\lambda n + r^2)$ independence oracle queries.

%%Then the following lemma about the submodular maximization algorithm of Buchbinder-Feldman-Schwartz except the part of rounding algorithm holds.

We now show that our submodular maximization algorithm with subquadratic query complexity 
is derived from Theorems~\ref{thm:main_rounding} and~\ref{submodular_max_lemma_independence}.

\begin{proof}[Proof of Theorem \ref{main_submodular_max}]
Let $\lambda = \Theta(\sqrt{r})$ and let $\varepsilon' = \varepsilon /2$. 
Note that we can compute $r$ using $O(n)$ independence oracle queries by a greedy algorithm. 
We first run the algorithm in Theorem \ref{submodular_max_lemma_independence}
with parameters $\lambda$ and $\varepsilon'$
to obtain a point $x \in B(\M)$, in which 
we use $O(\sqrt{r} n \text{ \textup{poly}}(1/\varepsilon, \log n))$ value and independence oracle queries. 
For the obtained point $x$, 
we apply our fast rounding algorithm in Theorem \ref{thm:main_rounding} with an error parameter $\varepsilon'$. 
Then, we obtain a basis $S$ of $\M$ such that 
\begin{align*}
\E[f(S)] &\geq (1 - \varepsilon') \cdot \E[F(x)] \\ 
         &\geq (1 - \varepsilon') \cdot (1 - 1/e - \varepsilon') \cdot \max \{ f(T) \mid T \in \I \} \\
         &\geq (1 - 1/e - \varepsilon) \cdot \max \{ f(T) \mid T \in \I \}. 
\end{align*}
Since $x$ is represented as a convex combination of $1/\varepsilon'$ bases of $\M$ by Theorem~\ref{submodular_max_lemma_independence}, 
our rounding algorithm requires $O(r^{3/2} \text{ \textup{poly}}(1/\varepsilon, \log n))$ independence oracle queries by Theorem~\ref{thm:main_rounding}. 
Therefore, 
we obtain a $(1 - 1/e - \varepsilon)$-approximation algorithm that uses $O(\sqrt{r} n \text{ \textup{poly}}(1/\varepsilon, \log n))$ value and independence oracle queries, which completes the proof.
% {\bf [CHECK: I couldn't see the role of this sentence.]
% We note that a $1/3$ approximation for a matroid constraint achieves only using $O(n \log r)$ value and independence oracle queries; see \cite[Lemma 3.2]{buchbinder2017comparing}.
% }
\end{proof}

% \begin{lemma}[{\cite[Lemma 8, 9]{ene2018towards}}] \label{submodular_max_lemma_rank}
% Given a non-negative monotone submodular function $f \colon 2^{V} \rightarrow \mathbb{R}_+$, a matroid $\M = (V, \I)$ of rank $r$, and a parameter $\varepsilon > 0$,
% there is an algorithm satisfying the following conditions:
% \begin{item}
% \item The algorithm finds a point $x \in P(\M)$ such that $F(x) \geq (1 - 1/e - \varepsilon) \cdot \max \{ f(T) \mid T \in \I \}$, where $F \colon [0, 1]^V \rightarrow \mathbb{R}_+$ is the multilinear extension of $f$, 
% \item it uses $\displaystyle O\left( \frac{n}{ \varepsilon^5} \log^2 \left(\frac{n}{\varepsilon} \right) \right)$ value oracle queries and arithmetic operations,
% \item it uses $\displaystyle O \left( \frac{n}{\varepsilon} \log \left( \frac{n}{\varepsilon} \right) \right)$ independent set data structure operations, and
% \item it uses $\displaystyle O \left( \frac{r}{\varepsilon} \log^2 \left( \frac{n}{\varepsilon} \right) \right)$ maximum weight data structure operations.
% \end{item}
% \end{lemma}

% \section{Faster Swap Rounding Algorithm} \label{sec:faster_rounding_algorithm}

% In this section, 
% we first describe the swap rounding algorithm of Chekuri-Vondr\'{a}k-Zenklusen~\cite{chekuri2010dependent} for a matroid base polytope in Section~\ref{subsec:preimplement}, and then 
% improve it in Section~\ref{subsec:fast_impl_merge_bases}.

% we present a fast implementation of the swap rounding algorithm of Chekuri-Vondr\'{a}k-Zenklusen~\cite{chekuri2010dependent} for a matroid base polytope.

\section{Swap Rounding Algorithm in Previous Work}
\label{sec:preimplement}

In this section, we describe the swap rounding algorithm of Chekuri-Vondr\'{a}k-Zenklusen~\cite{chekuri2010dependent} for a matroid base polytope, which we denote \texttt{SwapRound}. 
%%In the swap rounding algorithm of Chekuri-Vondr\'{a}k-Zenklusen~\cite{chekuri2010dependent}, which we denote \texttt{SwapRound},  
As described in Section~\ref{subsec:overview}, 
our new rounding algorithm is based on \texttt{SwapRound}. 
In \texttt{SwapRound},  
we are given a point $x \in B(\M)$ that is represented as a convex combination of 
the characteristic vectors of $t$ bases of $\M$. 
The output is a single basis $S$ of $\M$ such that
$\E[f(S)] \geq F(x)$ for any submodular function $f\colon 2^V \to \mathbb{R}$ and its multilinear extension $F$. 
In each phase of \texttt{SwapRound}, 
we pick up two bases in the representation of $x$ and merge them into a basis. 
By applying this procedure $t-1$ times, \texttt{SwapRound} finally returns a single basis of $\M$; see Algorithm~\ref{alg:swapround}.

The procedure for merging two bases is denoted by \texttt{MergeBases} (Algorithm \ref{alg:mergebases}). 
The input of \texttt{MergeBases} consists of two bases $B_1$ and $B_2$ together with 
their coefficients $\beta_1$ and $\beta_2$. 
%%In each step of \texttt{MergeBases}, 
In the procedure, until $B_1$ and $B_2$ coincide, 
we repeatedly update $B_1$ and $B_2$ so that $|B_1 \setminus B_2|$ decreases monotonically. 
In each update of $B_1$ and $B_2$, we need 
a {\em strongly exchangeable pair} of elements, that is, 
a pair of elements $u \in B_1 \setminus B_2$
and $v \in B_2 \setminus B_1$ such that $B_1 + v - u \in \I$ and $B_2 + u - v \in \I$. 
As described in \texttt{UpdateViaStrongBasisExchange} (Algorithm~\ref{alg:update_via_strong_basis_exchange}), 
for a strongly exchangeable pair $u$ and $v$, 
we apply $B_1 \gets B_1 + v - u$ with probability $\frac{\beta_2}{\beta_1+\beta_2}$
and apply $B_2 \gets B_2 + u - v$ with the remaining probability. 
Note that, in \texttt{UpdateViaStrongBasisExchange}, 
$B_1$ and $B_2$ are updated to $B_1'$ and $B_2'$ 
so that $\E[\beta_1 \mathbf{1}_{B_1'} + \beta_2 \mathbf{1}_{B_2'}] = \beta_1 \mathbf{1}_{B_1} + \beta_2 \mathbf{1}_{B_2}$, 
which is a key property to show the validity of the algorithm.

The most time consuming part in \texttt{MergeBases} is to find 
a strongly exchangeable pair. 
%%a pair of elements $u$ and $v$ 
%%satisfying the above conditions. 
% $u \in B_1 \setminus B_2$
% and $v \in B_2 \setminus B_1$ such that $B_1 + v - u \in \I$ and $B_2 + u - v \in \I$. 
By the strong basis exchange property of matroids, 
we can find such a pair of elements using $O(r)$ independence oracle queries in the following way: 
fix an element $u\in B_1 \setminus B_2$ arbitrarily and 
check the conditions for each element $v \in B_2 \setminus B_1$ one by one. 
Since we update the bases $|B_1 \setminus B_2| = O(r)$ times, 
\texttt{MergeBases} requires $O(r^2)$ independence oracle queries in total.
Hence, \texttt{SwapRound} requires $O(r^2 t)$ independence oracle queries. 

It is not clear whether we can develop an algorithm that finds 
%%such a pair of elements $u$ and $v$ 
a strongly exchangeable pair
using $o(r)$ independence oracle queries.
Therefore, their implementation of \texttt{SwapRound} is now stuck at $\Omega(r^2 t)$ independence oracle queries.

\begin{algorithm}[t] %[H]
    $C_1 \gets B_1$ \\
    $\gamma_1 \gets \beta_1$ \\
    \For{$i = 1$ \textup{to} $t - 1$} {
        $C_{i + 1} \gets \text{\texttt{MergeBases}}(\gamma_i, C_i, \beta_{i + 1}, B_{i + 1})$ \\
        $\gamma_{i + 1} \gets \gamma_i + \beta_{i + 1}$ \\
    }
    \Return $C_t$
    \caption{\texttt{SwapRound}$(x = \sum_{i = 1}^{t} \beta_i \mathbf{1}_{B_i})$}\label{alg:swapround}
\end{algorithm}

\begin{algorithm}[t] %[H]
    \While {$B_1 \neq B_2$} {
        Pick arbitrary $u \in B_1 \setminus B_2$ \\
        Find $v \in B_2 \setminus B_1$ such that $B_1 + v - u \in \I$ and $B_2 + u - v \in \I$ \label{line:mergebases_1} \\
        % Flip a coin with \texttt{Heads} probability $\displaystyle \frac{\beta_1}{\beta_1 + \beta_2}$ \\
        % \If{coin flipped \textup{\texttt{Heads}}} {
        %     $B_2 \gets B_2 + u - v$
        % } \Else {
        %     $B_1 \gets B_1 + v - u$
        % }
        \texttt{UpdateViaStrongBasisExchange}$(\beta_1, B_1, \beta_2, B_2, v, u)$ \\
    } 
    \Return $B_1$
    \caption{\texttt{MergeBases}$(\beta_1, B_1, \beta_2, B_2)$}\label{alg:mergebases}
\end{algorithm}

\begin{algorithm}[t] %[H]
    \KwIn{$\beta_1, \beta_2 \in \mathbb{R}_{+}$, two bases $B_1, B_2$, and elements $v \in B_2 \setminus B_1$ and $u \in B_1 \setminus B_2$  such that $B_1 + v - u \in \I$ and $B_2 + u - v \in \I$}
    Flip a coin with \texttt{Heads} probability $\displaystyle \frac{\beta_2}{\beta_1 + \beta_2}$ \\
    \If{coin flipped \textup{\texttt{Heads}}} {
        $B_1 \gets B_1 + v - u$
    } \Else {
        $B_2 \gets B_2 + u - v$
    }
    \caption{\texttt{UpdateViaStrongBasisExchange}$(\beta_1, B_1, \beta_2, B_2, v, u)$}\label{alg:update_via_strong_basis_exchange}
\end{algorithm}

\section{Faster Rounding Algorithm} \label{sec:faster_rounding_algorithm}

%%\subsection{Fast Implementation of Swap Rounding Algorithm using Independence Oracle} \label{subsec:fast_impl_merge_bases}

%%In this section we present our fast implementation 
%%in the independence oracle model.

In this section, we present our fast rounding algorithm. 
We first show the following theorem, and then prove Theorem \ref{thm:main_rounding} using this theorem.

\begin{theorem} \label{thm:fast_swap_rounding}
There is a randomized algorithm satisfying the following conditions: 
\begin{itemize}
    \item the input consists of a matroid $\M = (V, \I)$ given as an independence oracle and 
    a point $x \in B(\M)$ represented as a convex combination of $t$ bases, 
    \item the output is a basis $S$ of $\M$ such that $\E[f(S)] \geq F(x)$ for any submodular function $f\colon 2^V \to \mathbb{R}$ and its multilinear extension $F$, and 
    \item it uses $O(r^{3/2}t \log^{3/2}(r t))$ independence oracle queries with probability at least $1 - (rt)^{-1}$. 
\end{itemize}
% There is a randomized algorithm that takes as input a point $x \in B(\M)$ represented as a covex combination of $t$ bases and rounds it to an integral solution $S \in \B$ such that $\E[f(S)] \geq F(x)$.
% This algorithm uses $O(r^{3/2}t \log^{3/2}(r t))$ independence oracle queries with high probability, where $r$ is the rank of $\M$.
\end{theorem}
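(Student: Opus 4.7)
The overall plan is to keep the two-level structure of \texttt{SwapRound} from Section~\ref{sec:preimplement}, namely Algorithm~\ref{alg:swapround} together with the outer \texttt{while} loop of Algorithm~\ref{alg:mergebases}, but to replace the innermost step---finding a strongly exchangeable pair, which costs $O(r)$ independence oracle queries in the original algorithm---by a new subroutine that finds a directed cycle of arbitrary length in an auxiliary digraph using only $\tilde{O}(\sqrt{r})$ independence oracle queries with high probability. Since \texttt{MergeBases} is called $t-1$ times and each call performs at most $|B_1 \setminus B_2| \le r$ updates, the total number of updates is $O(rt)$, so achieving $\tilde{O}(\sqrt{r})$ queries per update would yield the claimed $O(r^{3/2} t \log^{3/2}(rt))$ bound.

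First, I would formalize an auxiliary digraph $D = D(B_1, B_2)$ on the vertex set $B_1 \triangle B_2$, whose arcs encode one-sided exchangeability so that a directed cycle of length two is exactly a strongly exchangeable pair. I would then prove a generalization of \texttt{UpdateViaStrongBasisExchange} that takes any directed cycle $C$ of even length $2k$ in $D$ and, after a single biased coin flip with \texttt{Heads} probability $\beta_2/(\beta_1 + \beta_2)$, simultaneously performs all swaps along $C$ into either $B_1$ or $B_2$. Using that the multilinear extension is affine in each coordinate and iterating the two-element concavity inequality of Chekuri--Vondr\'{a}k--Zenklusen along the arcs of $C$, the update preserves the rounding invariant $\E[F(\beta_1 \mathbf{1}_{B_1'} + \beta_2 \mathbf{1}_{B_2'})] \ge F(\beta_1 \mathbf{1}_{B_1} + \beta_2 \mathbf{1}_{B_2})$ and strictly decreases $|B_1 \setminus B_2|$ by $k$. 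This reduces correctness of the fast algorithm to that of \texttt{SwapRound} and keeps the total number of cycle-finding calls at $O(rt)$.

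The main technical step, and the main obstacle, is the cycle-finding subroutine. The plan is to sample a uniformly random subset $U \subseteq V(D)$ of size $s = \Theta(\sqrt{r \log(rt)})$ and examine the induced subgraph $D'$ on $U$. For each vertex of $U$, I would use \texttt{FindExchangeElement} from Lemma~\ref{lem:binary_search_ind} with search set $U$ to detect in $O(\log s)$ queries whether it has an incoming arc from within $U$, and if so return one. If every vertex of $U$ has an in-arc inside $D'$, then walking backwards along these located arcs from an arbitrary starting vertex must revisit a vertex within $s$ steps, producing a directed cycle at total cost $\tilde{O}(\sqrt{r})$. Otherwise, some $v \in U$ has no in-arc from within $U$; a standard sample-size concentration argument then shows, with probability at least $1 - (rt)^{-2}$, that the true in-degree of $v$ in $D$ is $O(r/s) = \tilde{O}(\sqrt{r})$, so I can enumerate $v$'s in-neighbors in $D$ via at most $\tilde{O}(\sqrt{r})$ successive binary searches and identify one that also has an arc in the reverse direction, which is guaranteed to exist by the strong basis exchange property applied to $v$ and the current pair of bases. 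In either case the subroutine uses $\tilde{O}(\sqrt{r})$ queries.

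A union bound over the $O(rt)$ invocations of the cycle-finding subroutine then yields overall query complexity $O(r^{3/2} t \log^{3/2}(rt))$ with probability at least $1 - (rt)^{-1}$, while the inequality $\E[f(S)] \ge F(x)$ holds unconditionally by composing the per-update inequality established in the second paragraph across all merges. The subtle part I expect to be most delicate is calibrating the sample size $s$ so that both branches of the dichotomy---the ``all in-arcs present'' case and the ``some vertex has no in-arc'' case---fit simultaneously into the $\tilde{O}(\sqrt{r})$ query budget while retaining sufficient success probability after the union bound. A secondary point that needs care is the telescoping concavity argument for a long cycle, since each swap along the cycle modifies the vector on which the next swap is analyzed; a clean induction on $k$ should handle this, but the algebra must be written out carefully.
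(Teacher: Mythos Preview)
Your overall architecture---sample $\Theta(\sqrt{r\log(rt)})$ vertices, use \texttt{FindExchangeElement} to test for in-arcs inside the sample, traverse backwards if every sampled vertex has one, and otherwise exploit that the arc-less vertex has small in-degree in the full digraph---matches the paper's cycle-finding procedure (Proposition~\ref{propf:findcycle}, Lemmas~\ref{lem:smalldegbruteforce} and~\ref{lem:low_degree}) essentially line for line.

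The genuine gap is in your update step. You propose, after one biased coin flip, to perform \emph{all} swaps along the cycle $C=(u_0,v_0,u_1,v_1,\dots,v_{l-1})$ simultaneously into $B_1$ or $B_2$. This can fail to produce a basis. The arcs of $C$ in $E_1(B_1,B_2)$ only certify that each \emph{individual} swap $B_1+v_i-u_i$ is independent; they do \emph{not} certify that $B_1\cup\{v_0,\dots,v_{l-1}\}\setminus\{u_0,\dots,u_{l-1}\}$ is independent. In matroid terms, a multi-element exchange along a matching in the exchange graph is guaranteed to yield a basis only when that matching is the \emph{unique} perfect matching on the chosen vertices, and the cycle gives you no such uniqueness. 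A concrete obstruction: in a rank-$3$ linear matroid one can take $B_1=\{u_0,u_1,c\}$ and $B_2=\{v_0,v_1,d\}$ with $c\in\mathrm{span}\{v_0,v_1\}$, arrange the four required single-swap independences and the two $E_2$-arcs so that $u_0\to v_0\to u_1\to v_1\to u_0$ is a cycle in $D_\M(B_1,B_2)$, yet $B_1+v_0+v_1-u_0-u_1=\{c,v_0,v_1\}$ is dependent. Your proposed telescoping/induction on $k$ cannot rescue this: after the first swap $B_1\gets B_1+v_0-u_0$, the validity of the next swap is precisely the independence of $B_1+v_0+v_1-u_0-u_1$, which is exactly what fails. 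Relatedly, the multilinear extension $F$ is convex along each direction $\mathbf{1}_v-\mathbf{1}_u$ but has no sign on the Hessian along $\sum_i(\mathbf{1}_{v_i}-\mathbf{1}_{u_i})$, so the Chekuri--Vondr\'ak--Zenklusen inequality does not iterate in the way you suggest.

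The paper's fix is simple but essential: \texttt{UpdateWithCycle} (Algorithm~\ref{alg:update_with_cycle}) flips the same biased coin and then picks \emph{one} index $i\in\{0,\dots,l-1\}$ uniformly at random, performing the single swap $B_1\gets B_1+v_i-u_i$ or $B_2\gets B_2+u_{i+1}-v_i$. Each outcome is a basis by definition of the arcs, $|B_1\setminus B_2|$ decreases by exactly one, and because only two coordinates of $\beta_1\mathbf{1}_{B_1}+\beta_2\mathbf{1}_{B_2}$ move (one up, one down) while the expectation is preserved (Lemma~\ref{lem:cycle_useful}), Lemma~\ref{lem:helpful_from_chekuri_et_al} applies directly without any telescoping. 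The query bound is unaffected: you still have $O(rt)$ updates in total, each costing $O(\sqrt{r}\log^{3/2}(rt))$ queries with probability $1-(rt)^{-2}$, and a union bound gives the theorem.
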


%%Now, we present our fast implementation of the merge bases algorithm.

To show this theorem, we propose an algorithm that merges the bases one by one 
in the same way as \texttt{SwapRound}. 
%%the algorithm of Chekuri-Vondr\'{a}k-Zenklusen~\cite{chekuri2010dependent}. 
%%described in the previous subsection. 
Our contribution is to improve \texttt{MergeBases}, that is, 
we give a faster algorithm for merging two bases into a single basis. 
The following auxiliary graph plays an important role in our algorithm.

%%our implementation of the merge bases algorithm.

\begin{definition} \label{def:auxiliary_directed_graph}
Let $\M = (V, \I)$ be a matroid, and let $B_1, B_2$ be two bases of $\M$.
Then we define the bipartite directed graph $D_{\M}(B_1, B_2)$ 
whose vertex set and edge set are 
%$(B_1 \setminus B_2) \cup (B_2 \setminus B_1)$ and 
$B_1 \triangle B_2$ and 
$E_1(B_1, B_2) \cup E_2(B_1, B_2)$, respectively, where 
\begin{align*}
%%D_{\M}(B_1, B_2) &= ((B_1 \setminus B_2) \cup (B_2 \setminus B_1), E_1(B_1, B_2) \cup E_2(B_1, B_2)), \\
E_1(B_1, B_2) &= \{(u, v) \mid u \in B_1 \setminus B_2, v \in B_2 \setminus B_1, B_1 + v - u \in \I \} , \\
E_2(B_1, B_2) &=  \{(v, u) \mid u \in B_1 \setminus B_2, v \in B_2 \setminus B_1, B_2 + u - v \in \I \}.
\end{align*}
\end{definition}

In terms of this auxiliary graph, each step of \texttt{MergeBases} can be interpreted as follows: 
it finds a directed cycle of length two (or a bidirected edge) in $D_{\M}(B_1, B_2)$
and updates the bases $B_1$ and $B_2$ using this directed cycle as in \texttt{UpdateViaStrongBasisExchange}. 
Note that we use $O(r)$ independence oracle queries to find a directed cycle of length two.

A key idea in our algorithm is to focus on a directed cycle of arbitrary length in $D_{\M}(B_1, B_2)$
instead of a directed cycle of length two. 
More precisely, our contribution consists of the following two technical results. 
\begin{enumerate}
    \item We can find a directed cycle in $D_{\M}(B_1, B_2)$ using $o(r)$ independence oracle queries with high probability. 
    \item We can appropriately update the bases using a directed cycle of arbitrary length in $D_{\M}(B_1, B_2)$.  
\end{enumerate}
We discuss the first and second technical results in Sections~\ref{subsec:finddicycle} and~\ref{subsec:modificationdicycle}, respectively. 
Then, we describe the entire algorithm and give proofs for Theorems~\ref{thm:main_rounding} and~\ref{thm:fast_swap_rounding} in Section~\ref{subsec:wholealgorithm}.

\subsection{Finding a Directed Cycle}
\label{subsec:finddicycle}

The objective of this subsection is to show the following proposition, 
which states that 
we can find a directed cycle in $D_{\M}(B_1, B_2)$ using $o(r)$ independence oracle queries with high probability. 

\begin{proposition}
\label{propf:findcycle}
Suppose we are given two bases $B_1$ and $B_2$ of a matroid $\M$ and an integer $t \ge 2$. 
%%Let $B_1$ and $B_2$ be two bases of a matroid $\M$ and let $t$ be an integer with $t \ge 2$. 
Then, we can find a directed cycle in $D_{\M}(B_1, B_2)$ using $O(\sqrt{r} \log^{3/2} (rt))$ independence oracle queries
with probability at least $1-(rt)^{-2}$. 
\end{proposition}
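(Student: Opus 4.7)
The plan is to combine uniform vertex sampling in $D := D_{\M}(B_1, B_2)$ with the binary-search subroutine \texttt{FindExchangeElement} of Lemma~\ref{lem:binary_search_ind}. Note that $V(D) = B_1 \triangle B_2$ has at most $2r$ vertices. Fix $s = \lceil C \sqrt{r \log(rt)}\,\rceil$ for a sufficiently large absolute constant $C$, and draw $S \subseteq V(D)$ uniformly at random among all subsets of size $s$.

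First, for each $w \in S$ I would test in $O(\log r)$ queries whether $w$ has an in-neighbor inside $S$. If $w \in B_1 \setminus B_2$ then the in-neighbors of $w$ in $D$ are exactly the $v \in B_2 \setminus B_1$ with $B_2 + w - v \in \I$, so a single invocation \texttt{FindExchangeElement}$(\M, B_2, w, S \cap (B_2 \setminus B_1))$ either returns such a $v \in S$ or certifies that none exists; the case $w \in B_2 \setminus B_1$ is symmetric. This phase uses $O(s \log r)$ queries in total. If every $w \in S$ receives an in-neighbor $\pi(w) \in S$, then iterating $\pi$ from an arbitrary start must eventually revisit a vertex of $S$, and the corresponding reverse traversal along the edges $\pi(v) \to v$ traces out a directed cycle in $D[S] \subseteq D$, which I would return.

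Otherwise, pick any $w \in S$ with no in-neighbor in $S$; without loss of generality assume $w \in B_1 \setminus B_2$. By the strong basis exchange property applied to $w$, there exists $v \in B_2 \setminus B_1$ satisfying both $B_1 + v - w \in \I$ and $B_2 + w - v \in \I$, i.e.\ a $v$ that closes a $2$-cycle $w \to v \to w$ in $D$. To locate such a $v$ I would enumerate the entire in-neighborhood $N^{-}_D(w) \subseteq B_2 \setminus B_1$ by repeatedly calling \texttt{FindExchangeElement}$(\M, B_2, w, T)$, starting from $T = B_2 \setminus B_1$ and deleting each newly discovered in-neighbor from $T$ until the call reports failure, and then for each $v \in N^{-}_D(w)$ perform a single independence query to test $B_1 + v - w \in \I$. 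The first successful test yields a $2$-cycle, which is guaranteed to exist. The queries spent are $O(|N^{-}_D(w)| \log r)$.

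The crux, and the step I expect to be the main obstacle, is to show that sampling forces $|N^{-}_D(w)|$ to be small for the $w$ selected above. For any fixed vertex $w' \in V(D)$, writing $d := |N^{-}_D(w')|$ and using $|V(D)| \leq 2r$,
\[
\Pr\!\left[\, S \cap N^{-}_D(w') = \emptyset \,\right] \;\le\; \left(1 - \frac{d}{|V(D)|}\right)^{s} \;\le\; \exp\!\left(-\frac{sd}{2r}\right),
\]
so choosing $C$ large enough makes the right-hand side at most $(rt)^{-4}$ whenever $d \ge C'\sqrt{r \log(rt)}$ for an appropriate constant $C'$. A union bound over the at most $2r$ vertices of $D$ shows that, except with probability $(rt)^{-2}$, every vertex with in-degree $\ge C'\sqrt{r \log(rt)}$ already has an in-neighbor in $S$; hence in that good event any $w$ chosen above satisfies $|N^{-}_D(w)| = O(\sqrt{r \log(rt)})$. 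Summing the costs of the first-phase checks ($O(s \log r)$), the enumeration of $N^{-}_D(w)$ ($O(|N^{-}_D(w)|\log r)$), and the $2$-cycle tests ($O(|N^{-}_D(w)|)$) then yields $O(\sqrt{r}\log^{3/2}(rt))$ queries overall, matching the claim, with the entire $(rt)^{-2}$ failure probability accounted for by the concentration step above.
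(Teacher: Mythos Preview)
Your proposal is correct and follows essentially the same approach as the paper: sample $\tilde O(\sqrt{r})$ vertices, use \texttt{FindExchangeElement} to look for in-neighbors inside the sample, trace a cycle if every sampled vertex has one, and otherwise exploit that any sample-isolated vertex has in-degree $O(\sqrt{r\log(rt)})$ with high probability to find a $2$-cycle via Lemma~\ref{lem:smalldegbruteforce}-style enumeration. The only cosmetic differences are that the paper samples the two sides $B_1\setminus B_2$ and $B_2\setminus B_1$ separately with replacement (rather than a single uniform subset of $B_1\triangle B_2$ without replacement) and isolates the $2$-cycle search as Lemma~\ref{lem:smalldegbruteforce}; neither affects the argument or the query bound.
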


To show this proposition, we first show that a directed cycle of length two can be found efficiently 
if we have an element whose indegree is small in $D_{\M}(B_1, B_2)$. 

\begin{lemma}
\label{lem:smalldegbruteforce}
Suppose we are given two bases $B_1$ and $B_2$ of a matroid $\M$, and 
an element $a \in B_1 \triangle B_2$ 
whose indegree is $d$ in $D_{\M}(B_1, B_2)$.
%%let $d$ be the outdegree of $u$ in $D_{\M}(B_1, B_2)$. 
Then, we can find a directed cycle of length two in $D_{\M}(B_1, B_2)$ 
%%by calling \texttt{FindExchangeElement} $d$ times. 
using $O(d \log r)$ independence oracle queries. 
\end{lemma}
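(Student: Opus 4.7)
The plan is to enumerate all in-neighbors of $a$ in $D_{\M}(B_1, B_2)$ via the binary search technique of Lemma \ref{lem:binary_search_ind}, and then to identify a directed cycle of length two by appealing to the strong basis exchange property. Without loss of generality, assume $a \in B_1 \setminus B_2$; the other case is handled symmetrically with $B_1$ and $B_2$ interchanged. By Definition \ref{def:auxiliary_directed_graph}, every incoming edge of $a$ belongs to $E_2(B_1, B_2)$ and has the form $(v, a)$ with $v \in B_2 \setminus B_1$ and $B_2 + a - v \in \I$; hence the in-neighborhood is $N^{-}(a) = \{v \in B_2 \setminus B_1 \mid B_2 + a - v \in \I\}$ and has size $d$ by hypothesis.

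I would first compute $N^{-}(a)$ by repeatedly invoking \texttt{FindExchangeElement} with inputs $\M$, $S = B_2$, $u = a$, and a candidate set $T$ initialized to $B_2 \setminus B_1$: each call either returns some $v \in T$ with $B_2 + a - v \in \I$, which we record as an in-neighbor of $a$ and then remove from $T$, or else certifies that no exchange element remains in $T$, at which point the enumeration halts. This produces the entire set $N^{-}(a)$ in at most $d+1$ calls, each costing $O(\log r)$ independence oracle queries by Lemma \ref{lem:binary_search_ind}, for a total of $O(d \log r)$. For each recorded $v \in N^{-}(a)$ I would then spend one additional independence query to test whether $B_1 + v - a \in \I$; any $v$ passing this test forms, together with $a$, a bidirected edge in $D_{\M}(B_1, B_2)$, which is the desired directed cycle of length two.

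Correctness follows from the strong basis exchange property: since $a \in B_1 \setminus B_2$, there exists $v^{\ast} \in B_2 \setminus B_1$ with both $B_1 + v^{\ast} - a \in \I$ and $B_2 + a - v^{\ast} \in \I$, so $v^{\ast} \in N^{-}(a)$ and is detected during the scan. The total number of independence oracle queries is $O(d \log r) + d = O(d \log r)$. The only nontrivial observation—and the main conceptual point of the lemma—is that although a priori nothing pins down where the bidirected edge incident to $a$ lies inside $B_2 \setminus B_1$, strong basis exchange forces it to sit inside $N^{-}(a)$, so it is enough to scan these $d$ candidates rather than the whole set $B_2 \setminus B_1$, which can be of size $\Theta(r)$.
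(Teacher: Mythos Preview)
Your proof is correct and follows essentially the same approach as the paper: use \texttt{FindExchangeElement} with $S=B_2$, $u=a$ to extract in-neighbors of $a$ one at a time, and test each against the condition $B_1+v-a\in\I$; strong basis exchange guarantees that some $v^\ast\in N^-(a)$ passes the test, so the search terminates within $O(d)$ calls. The only cosmetic difference is that you enumerate all of $N^-(a)$ up front and then scan it, whereas the paper interleaves the extraction and the test (stopping as soon as a good $v$ is found); both yield $O(d\log r)$ queries.
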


\begin{proof}
%%Without loss of generality, 
By symmetry, it suffices to consider the case when $a \in B_1 \setminus B_2$. 

We give an algorithm that finds an element $v \in B_2 \setminus B_1$
such that $B_1 +v -a \in \I$ and $B_2 + a -v \in \I$. 
In our algorithm, let $A \subseteq B_2 \setminus B_1$ denote the set of elements $v$
such that we have already checked that $B_1 +v -a \not\in \I$. 
We initialize $A = \emptyset$. 

In each step of our algorithm, by applying Lemma~\ref{lem:binary_search_ind} in which 
$u=a$, $S=B_2$, and $T = B_2 \setminus (B_1 \cup A)$, 
we can find an element $v \in T$ such that $B_2 + a -v \in \I$ if it exists. 
For such $v$, we check whether $B_1 +v -a \in \I$ holds or not. 
If $B_1 +v -a \in \I$ holds, then $a$ and $v$ induce a desired directed cycle. 
Otherwise, we add $v$ to $A$, and repeat the procedure. 

This algorithm finds a directed cycle correctly by the strong basis exchange property. 
Since we apply Lemma~\ref{lem:binary_search_ind} at most $d$ times and $|T| \le r$, 
this algorithm uses $O(d \log r)$ independence oracle queries. 
\end{proof}

We now describe our algorithm for finding a directed cycle in $D_{\M} (B_1, B_2)$. 
In our algorithm, 
% we first sample a subset of $B_1 \triangle B_2$. 
% Formally, 
we first sample $2 \sqrt{r \log (rt)}$ elements from $B_1 \setminus B_2$ (resp.~$B_2 \setminus B_1$) uniformly at random with replacement, 
where the base of the logarithm is $e$, 
and let $L$ (resp.~$R$) be the sampled vertex set, ignoring the multiplicity. 
Note that $1 \le |L| \le 2 \sqrt{r \log (rt)}$ and $1 \le |R| \le 2 \sqrt{r \log (rt)}$
as we ignore the multiplicity. 
% Then, each vertex in $B_1 \setminus B_2$ (resp.~$B_2 \setminus B_1$) is contained in $L$ (resp.~$R$) with probability 
% $$
% 1 - \left( 1 - \frac{1}{|B_1 \setminus B_2|} \right)^{2 \sqrt{r \log (rt)}}.
% $$
%
% let $L$ (resp.~$R$) be a random subset of $B_1 \setminus B_2$ (resp.~$B_2 \setminus B_1$)
% of size $\min \left\{ 2 \sqrt{r \log (rt)}, \ |B_1 \setminus B_2| \right\}$,  
%such that each element is included independently with probability $\min \left\{ 1, \frac{2 \sqrt{r \log (rt)}}{|B_1 \setminus B_2|} \right\}$, 
% Note that $L \cup R \neq \emptyset$ holds with high probability
% as we will see later. 
%%in Lemma~\ref{lem:low_degree}. 
Let $D'$ be the subgraph of $D_{\M} (B_1, B_2)$ induced by $L \cup R$. 

For each vertex $u$ in $D'$, 
we find a directed edge in $D'$ that enters $u$ or conclude that such a directed edge does not exist.  
This can be done by calling \texttt{FindExchangeElement} exactly once for each $u$. 
If every vertex in $D'$ has an incoming edge, then we can easily find a directed cycle in $D'$ by traversing such directed edges in the opposite direction. 
Otherwise, we pick up a vertex $a$ in $L \cup R$ that has no incoming edge in $D'$, 
and then apply Lemma~\ref{lem:smalldegbruteforce} with this vertex $a$ to find a directed cycle of length two. 

Since the correctness of this algorithm is clear, 
it suffices to analyze the independence query complexity. 
We use the following lemma in our analysis. 

% \begin{lemma} \label{lem:cycle_exist}
% After the execution of the for loop in Line \ref{line:fast_mergebases1}, 
% if $a = \emptyset$, then
% there exists a directed cycle $C$ in the bipartite directed graph $(L \cup R, E)$.
% \end{lemma}

% \begin{proof}
% In the bipartite directed graph $(L \cup R, E)$,
% all vertices in $L$ and $R$ have at least one outdegree if $a = \emptyset$.
% Thus, we obtain the desired conslusion.
% \end{proof}

% \begin{lemma} \label{lem:strong_exchange_element_exist}

% \end{lemma}

% \begin{proof}
% by the strong basis exchange property, 
% \end{proof}

% \begin{lemma}
% In particular, $L \cup R \neq \emptyset$  with probability at least $1 - (rt)^{-4}$.  

% \end{lemma}

\begin{lemma} \label{lem:low_degree}
Let $u \in B_1 \triangle B_2$ be an element
whose indegree in $D_{\M}(B_1, B_2)$ is at least $2 \sqrt{r \log (rt)}$. 
%%where the base of the logarithm is $e$. 
Then, the probability that 
$D_{\M}(B_1, B_2)$ has no directed edge from $L\cup R$ to $u$ 
 is at most $(rt)^{-4}$. 
% Then, the probability that 
% $(v, u) \not\in E(B_1, B_2)$ for any $v \in L \cup R$
%  is at most $(rt)^{-4}$. 
% there exists at least one element $u \in L \cup R$ such that $(u, v)$ is a directed edge in $D_{\M}(B_1, B_2)$
%  with probability at least $1 - (rt)^{-4}$. 
%%$D'$ has at least one arc leaving $u$ with probability at least $1 - (rt)^{-4}$.
% If the outdegree of the vertex $u \in B_1 \setminus B_2$ in the bipartite directed graph $D_M(B_1, B_2)$ is at least $2 \sqrt{r \log (rt)}$, 
% % If there exists st least $2 \sqrt{r \log (rt)}$ 
% then there exists at least one element $v \in R$ such that $(u, v) \in E_1(B_1, B_2)$
% with probability at least $1 - (rt)^{-4}$.
\end{lemma}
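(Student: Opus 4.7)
The plan is to reduce the statement to a clean sampling-with-replacement calculation. First, by the symmetry of the construction of $L$ and $R$, it suffices to handle the case $u \in B_1 \setminus B_2$ (the case $u \in B_2 \setminus B_1$ is identical after swapping the roles of $L, R$ and $B_1, B_2$). Under this assumption, inspecting Definition~\ref{def:auxiliary_directed_graph} shows that every in-edge of $u$ belongs to $E_2(B_1, B_2)$ and therefore has its tail in $B_2 \setminus B_1$. Hence the set $N$ of in-neighbors of $u$ in $D_{\M}(B_1, B_2)$ satisfies $N \subseteq B_2 \setminus B_1$ with $|N| \ge 2\sqrt{r \log(rt)}$, and $R$ is exactly the (deduplicated) collection of $2\sqrt{r \log(rt)}$ i.i.d.\ uniform samples from $B_2 \setminus B_1$.

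Next I would observe that the event ``$D_{\M}(B_1, B_2)$ has no directed edge from $L \cup R$ to $u$'' is contained in the event ``no sample used to form $R$ lies in $N$.'' Let $k := |B_2 \setminus B_1| \le r$. The probability that a single uniform sample from $B_2 \setminus B_1$ misses $N$ is $1 - |N|/k \le 1 - |N|/r$, and since the $2\sqrt{r \log(rt)}$ draws are independent, the probability that all of them miss $N$ is at most $\bigl(1 - |N|/r\bigr)^{2\sqrt{r \log(rt)}}$.

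I would then finish by the standard inequality $1 - x \le e^{-x}$. Using $|N| \ge 2\sqrt{r \log(rt)}$ gives
\begin{equation*}
\bigl(1 - |N|/r\bigr)^{2\sqrt{r \log(rt)}} \le \exp\!\left(-\frac{2 |N| \sqrt{r \log(rt)}}{r}\right) \le \exp\bigl(-4 \log(rt)\bigr) = (rt)^{-4},
\end{equation*}
which is the desired bound.

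There is no real obstacle here; the lemma is essentially a coupon-collector style estimate. The only subtle point to state carefully is the reduction to samples landing in $N$: one must note that $N \subseteq B_2 \setminus B_1$ (so sampling from $L \subseteq B_1 \setminus B_2$ cannot help, and hence the bound is driven entirely by $R$), and that deduplicating the sample set $R$ can only help, so working with the multiset of $2\sqrt{r\log(rt)}$ draws gives a valid upper bound on the failure probability. With those points in place, the conclusion follows directly from the displayed calculation.
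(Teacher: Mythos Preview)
Your proposal is correct and follows essentially the same approach as the paper: reduce by symmetry to $u \in B_1 \setminus B_2$, observe that all in-neighbors of $u$ lie in $B_2 \setminus B_1$, and bound the probability that the $2\sqrt{r\log(rt)}$ independent uniform samples forming $R$ all miss the in-neighbor set $N$ via $(1-|N|/|B_2\setminus B_1|)^{2\sqrt{r\log(rt)}} \le e^{-4\log(rt)} = (rt)^{-4}$. Your additional remarks about why $L$ cannot contribute and why deduplication only helps are accurate clarifications that the paper leaves implicit.
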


\begin{proof}
By symmetry, it suffices to consider the case when $u \in B_1 \setminus B_2$. 
% We may also assume that 
% $|R| = 2 \sqrt{r \log (rt)}$, since otherwise the claim is obvious as $B_2 \setminus B_1 = R$. 
Let $N = \{v \in B_2 \setminus B_1 \mid (v, u) \in E(B_1, B_2) \}$.
%
%%By the construction of $R$, 
% for each $u \in B_2 \setminus B_1$,  we have $u \notin R$ with probability $1 - \frac{2 \sqrt{r \log (rt)}}{|B_2 \setminus B_1|}$.
%%This follows from the construction of the set $R \subseteq B_2 \setminus B_1$ in the procedure \texttt{FastMergeBases}.
% Thus, 
% if the outdegree of the vertex $u \in B_1 \setminus B_2$ in the bipartite directed graph $D_M(B_1, B_2)$ is at least $2 \sqrt{r \log (rt)}$, 
Since $R$ is obtained by sampling $2 \sqrt{r \log (rt)}$ vertices from $B_2 \setminus B_1$ and 
$r \ge |B_2 \setminus B_1| \ge |N| \ge 2 \sqrt{r \log (rt)}$, 
we have the following:
\begin{align*}
% & \text{Pr}\left[ \left\{v \in R \colon (u, v) \in E(B_1, B_2)\right\} = \emptyset \middle| |\left\{v \in B_2 \setminus B_1 \colon (u, v) \in E(B_1, B_2)\right\}| \geq 2 \sqrt{r \log (rt)} \right]  \\
\text{Pr}\left[ \left\{v \in R \mid (v, u) \in E(B_1, B_2)\right\} = \emptyset \right] 
& = \text{Pr}\left[ N \cap R = \emptyset \right] \\
& =  \left(1 - \frac{|N|}{|B_2 \setminus B_1|} \right)^{2 \sqrt{r \log (rt)}} \\ 
& \leq \left(1 - \frac{2 \sqrt{r \log (rt)}}{r} \right)^{2 \sqrt{r \log (rt)}} \\ 
& \leq \left(e^{-1}\right)^{4 \log (rt)} \\ 
& = (rt)^{-4}, 
\end{align*}
which completes the proof. 
% Then, if there exist at least $2 \sqrt{r \log (rt)}$ edges from the element $u \in B_1 \setminus B_2$ in the bipartite directed graph $D_M(B_1, B_2)$, 
% then there exists at least one element $v \in R$ such that $(u, v) \in E_1(B_1, B_2)$
% with probability at least $1 - (rt)^{-4}$, which completes the proof.
\end{proof}

% \begin{lemma} \label{lem:merge_bases_query_complexity}
% The single exexcution of the outer while loop in the procedure \texttt{\textup{FastMergeBases}} (Algorithm \ref{alg:fastmergebases}) uses $O(\sqrt{r} \log^{3/2} (rt))$ independence oracle queries with probability at least $1 - (rt)^{-3}$.
% \end{lemma}

We are now ready to prove Proposition~\ref{propf:findcycle}. 

\begin{proof}[Proof of Proposition~\ref{propf:findcycle}]
We analyze the independence query complexity of the algorithm described above. 
%%We see that $L \cup R \neq \emptyset$ holds with very high probability. {\bf TODO: DO WE NEED PROOF?}
%%
% We first show that the number of independence oracle queries used in Lines \ref{line:fastmergebases_query1} and \ref{line:fastmergebases_query2} is $O(\sqrt{r} \log^{3/2} (rt))$ with very high probability\footnote{By \emph{very high probability}, we mean with probability at least $1 - r^{-c}$ for some arbitrary large constant $c$.}.
% We first analyze the number of independence oracle queries 
% that are used to find an edge entering each vertex $u$ in $D'$. 
% Suppose that $|B_2 \setminus B_1| > 2 \sqrt{r \log (rt)}$. 
% Since the expected sizes of $L$ and $R$ are $2 \sqrt{r \log (rt)}$, 
% by the Chernoff bound, the sizes of $L$ and $R$ are $\Theta(\sqrt{r \log (rt)})$ with very high probability\footnote{By \emph{very high probability}, we mean with probability at least $1 - r^{-c}$ for some arbitrary large constant $c$.}.
% In particular, $L \cup R \not= \emptyset$ holds with very high probability. 
First, since we call \texttt{FindExchangeElement} for each vertex $u \in L \cup R$ exactly once to find an incoming edge in $D'$, 
the number of calls of \texttt{FindExchangeElement} is $|L \cup R| = O(\sqrt{r \log (rt)})$.
Hence, by Lemma \ref{lem:binary_search_ind}, 
the number of independence oracle queries used in this part is $O(\sqrt{r \log (rt)} \log r)$.
%Note that we can obtain the same bound for the case when $|B_2 \setminus B_1| \le 2 \sqrt{r \log (rt)}$. 

We next analyze the number of independence oracle queries 
when there exists a vertex $a \in L \cup R$ that has no incoming edge in $D'$. 

% show that the number of independence oracle queries used in Lines \ref{line:fastmergebases_query3} and \ref{line:fastmergebases_query4} is $O(\sqrt{r} \log^{3/2} (rt))$ with probability at least $1 - (rt)^{-3}$.

We call a vertex $u \in L  \cup R$ \emph{bad} if 
$D'$ has no directed edge entering $u$ and $D_{\M}(B_1, B_2)$ has at least $2 \sqrt{r \log (rt)}$ directed edges entering $u$.  
By Lemma \ref{lem:low_degree}, for each $u \in L \cup R$, the vertex $u$ is bad with probability at most $(rt)^{-4}$.
Thus, by taking the union bound over all vertices in $L \cup R$, 
we see that there exists a bad vertex in $L \cup R$ with probability at most $(rt)^{-2}$.

% there are at least $2 \sqrt{r \log (rt)}$ edges from $w$ in the bipartite directed graph $D_{\M}(B_1, B_2)$ and there is no edge from $w$ in the induced subgraph of $D_{\M}(B_1, B_2)$ with vertex set $L \cup R$.
% By Lemmas \ref{lem:low_degree} and \ref{lem:low_degree_second}, for each $w \in L \cup R$, the element $w$ is bad with probability at most $(rt)^{-4}$.
% Thus, by taking the union bound over all elements in $L \cup R$, 
% we obtain that there exists a bad element in $L \cup R$ with probability at most $(rt)^{-3}$.

% Similarly, we call an element $v \in R$ \emph{bad} if there are at least $2 \sqrt{r \log (rt)}$ edges from $v$ in the bipartite directed graph $D_M(B_1, B_2)$ and there are no edge from $v$ in the induced subgraph of $D_M(B_1, B_2)$ with vertex set $L \cup R$.

We now consider the case where there is no bad vertex in $L \cup R$. 
Suppose that there exists a vertex $a \in L \cup R$ that has no incoming edge in $D'$. 
Then, since $a$ is not bad, the indegree of $a$ is at most $2 \sqrt{r \log (rt)}$ in $D_{\M}(B_1, B_2)$. 
Therefore, we can apply Lemma~\ref{lem:smalldegbruteforce} with $a$ 
using $O(\sqrt{r \log (rt)} \log r)$ independence oracle queries. 
%%the number of \texttt{FindExchangeElement} calls is $O(\sqrt{r \log (rt)})$.

%%We now recall that there exists no bad element in $L \cup R$ with probability at least $1 - (rt)^{-3}$.
Therefore, the total number of independence oracle queries used in the algorithm 
is $O(\sqrt{r \log (rt)} \log r)$ with probability at least $1 - (rt)^{-2}$,
which completes the proof.
\end{proof}

\subsection{Update with a Directed Cycle}
\label{subsec:modificationdicycle}

In this subsection, we describe how to update the bases using a directed cycle in $D_{\M}(B_1, B_2)$.  
Let $C$ be a directed cycle in $D_{\M}(B_1, B_2)$ that traverses 
$u_0, v_0, u_1, v_1, \dots ,v_{l-1}$ in this order, where $u_i \in B_1 \setminus B_2$ and $v_i \in B_2 \setminus B_1$ for each $i$. 
%%Note that we define $u_l = u_0$.
In our algorithm, we first choose $B_1$
with probability 
$\frac{\beta_2}{\beta_1 + \beta_2}$ and 
choose $B_2$ with the remaining probability. 
If we choose $B_1$, then
we pick up an index $i$ uniformly at random from $\{0, \ldots, l - 1\}$ and update $B_1$ by $B_1 \gets B_1 + v_i - u_i$. 
If we choose $B_2$, then
we pick up an index $i$ uniformly at random from $\{0, \ldots, l - 1\}$ and update $B_2$ by $B_2 \gets B_2 + u_{i+1} - v_{i}$, 
where we denote $u_l = u_0$. 
The pseudocode of this algorithm is shown in \texttt{UpdateWithCycle} (Algorithm~\ref{alg:update_with_cycle}). 
We note that, if the length of the directed cycle is two, then 
\texttt{UpdateWithCycle} coincides with \texttt{UpdateViaStrongBasisExchange}.

% $\frac{\beta_2}{\beta_1 + \beta_2}$, 
% we update $B_2$ and $B_1$, respectively. 

%\begin{algorithm}[H]
\begin{algorithm}
    \KwIn{$\beta_1, \beta_2 \in \mathbb{R}_{+}$, two bases $B_1, B_2$, and a directed cycle $C$ in the bipartite directed graph $D_{\M}(B_1, B_2)$}
    Denote by $V(C) = \{u_0, v_0, u_1, v_1, \ldots, v_{l - 1}\}$ the vertices in $C$ in this order (with $u_i \in B_1 \setminus B_2$ and $v_i \in B_2 \setminus B_1$ for each $i$) \\ 
    Flip a coin with \texttt{Heads} probability $\displaystyle \frac{\beta_2}{\beta_1 + \beta_2}$ \\
    \If{coin flipped \textup{\texttt{Heads}}} {
%%        Pick an index $i$ uniformly and independently at random from $\{0, \ldots, l - 1\}$ \\
        Pick an index $i$ uniformly at random from $\{0, \ldots, l - 1\}$ \\
        $B_1 \gets B_1 + v_i - u_i$ 
    } \Else {
%%        Pick an index $i$ uniformly and independently at random from $\{0, \ldots, l - 1\}$ \\
        Pick an index $i$ uniformly at random from $\{0, \ldots, l - 1\}$ \\
        $B_2 \gets B_2 + u_{i+1} - v_i$ \tcp{We define $u_l = u_0$.} 
    }
    \caption{\texttt{UpdateWithCycle}$(\beta_1, B_1, \beta_2, B_2, C)$}\label{alg:update_with_cycle}
\end{algorithm}

In order to show the validity of the algorithm, 
we use the following two lemmas. 
%The following lemma plays a key role to show the validity of the algorithm. 

\begin{lemma} \label{lem:cycle_useful}
% Let $C$ be a directed cycle in $D_{\M}(B_1, B_2)$ with vertex sequence $u_0, v_0, u_1, v_1, \ldots, v_{l - 1}, u_0$.
% Let $u_0, v_0, u_1, v_1, \ldots, v_{l - 1}, u_0$ be the vertices of a directed cycle $C$ in $D_{\M}(B_1, B_2)$ (in this order), with $u_0 \in B_1 \setminus B_2$.
Given two bases $B_1$ and $B_2$ and a directed cycle $C$ in the bipartite directed graph $D_{\M}(B_1, B_2)$,
the procedure \texttt{\textup{UpdateWithCycle}} updates $B_1$ and $B_2$ to $B_1'$ and $B_2'$, respectively, so that $\E[\beta_1 \mathbf{1}_{B_1'} + \beta_2 \mathbf{1}_{B_2'}] = \beta_1 \mathbf{1}_{B_1} + \beta_2 \mathbf{1}_{B_2}$. 
\end{lemma}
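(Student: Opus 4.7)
The plan is to verify the expectation identity coordinatewise, that is, to show that for every $w \in V$ the expected value of the $w$-th coordinate of $\beta_1 \mathbf{1}_{B_1'} + \beta_2 \mathbf{1}_{B_2'}$ equals the $w$-th coordinate of $\beta_1 \mathbf{1}_{B_1} + \beta_2 \mathbf{1}_{B_2}$. The first observation is that every swap performed by \texttt{UpdateWithCycle} moves an element of $B_1 \setminus B_2$ between $B_1$ and $B_1'$, or an element of $B_2 \setminus B_1$ between $B_2$ and $B_2'$, and in both cases the affected elements lie in $V(C) \subseteq B_1 \triangle B_2$. Consequently, for any $w \in V$ with $w \notin V(C)$ we have $\mathbf{1}_{B_1'}(w) = \mathbf{1}_{B_1}(w)$ and $\mathbf{1}_{B_2'}(w) = \mathbf{1}_{B_2}(w)$ deterministically, so the identity holds trivially at such $w$.

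The remaining task is to handle the two symmetric cases $w = u_j$ and $w = v_j$ for the vertices of the cycle. Consider for instance $w = u_j \in B_1 \setminus B_2$. On the Heads branch, which occurs with probability $\beta_2/(\beta_1+\beta_2)$, the element $u_j$ is removed from $B_1$ exactly when the uniformly chosen index $i$ equals $j$ (probability $1/l$), while $B_2$ is unchanged. On the Tails branch, which occurs with probability $\beta_1/(\beta_1+\beta_2)$, the element $u_j$ is added to $B_2$ exactly when $i = j-1 \pmod{l}$ (probability $1/l$), while $B_1$ is unchanged. Combining these contributions gives
\begin{equation*}
\E[\beta_1 \mathbf{1}_{B_1'}(u_j) + \beta_2 \mathbf{1}_{B_2'}(u_j)]
= \frac{\beta_2}{\beta_1+\beta_2}\,\beta_1\!\left(1 - \tfrac{1}{l}\right) + \frac{\beta_1}{\beta_1+\beta_2}\!\left(\beta_1 + \tfrac{\beta_2}{l}\right),
\end{equation*}
and the $1/l$ terms cancel, leaving $\beta_1$, which matches $\beta_1 \mathbf{1}_{B_1}(u_j) + \beta_2 \mathbf{1}_{B_2}(u_j) = \beta_1$. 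The analysis for $w = v_j \in B_2 \setminus B_1$ is completely analogous: $v_j$ is added to $B_1$ on Heads when $i = j$, and removed from $B_2$ on Tails when $i = j$, so the same cancellation yields $\beta_2$.

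There is essentially no obstacle here beyond bookkeeping; the only subtle point is to align the indexing conventions so that each $u_j$ has exactly one chance per branch to be swapped, namely index $j$ on Heads (removal from $B_1$ via $B_1 + v_j - u_j$) and index $j-1$ on Tails (insertion into $B_2$ via $B_2 + u_j - v_{j-1}$), and similarly for $v_j$. Once this matching is in place, uniform randomness of $i$ over $\{0,\ldots,l-1\}$ gives each relevant event probability exactly $1/l$, which is precisely what drives the cancellation above. Putting the cases together yields $\E[\beta_1 \mathbf{1}_{B_1'} + \beta_2 \mathbf{1}_{B_2'}] = \beta_1 \mathbf{1}_{B_1} + \beta_2 \mathbf{1}_{B_2}$, as claimed.
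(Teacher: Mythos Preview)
Your proof is correct and follows essentially the same approach as the paper: both compute the expectation directly by linearity and observe the cancellation of the $1/l$ contributions. The paper packages the computation as a single vector identity (noting that $\sum_i (\mathbf{1}_{v_i}-\mathbf{1}_{u_i})$ and $\sum_i (\mathbf{1}_{u_{i+1}}-\mathbf{1}_{v_i})$ are negatives of each other and carry the same coefficient $\frac{\beta_1\beta_2}{l(\beta_1+\beta_2)}$), whereas you unpack the same cancellation coordinatewise; the content is identical.
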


\begin{proof}
Recall that $C$ traverses 
$u_0, v_0, u_1, v_1, \dots ,v_{l-1}$ in this order, where $u_i \in B_1 \setminus B_2$ and $v_i \in B_2 \setminus B_1$ for each $i$. 
%%Let $u_0, v_0, u_1, v_1, \ldots, v_{l - 1}, u_0$ be the vertices of a directed cycle $C$ in $D_{\M}(B_1, B_2)$ (in this order), with $u_0 \in B_1 \setminus B_2$.
In the procedure \texttt{UpdateWithCycle}, 
we obtain 
$B_1' = B_1 + v_i - u _i$ for some $i \in \{0, \ldots, l - 1\}$ and $B_2' = B_2$ with probability $\beta_2 / (\beta_1 + \beta_2)$, 
and 
we obtain $B_1' = B_1$ and $B'_2 = B_2 + u_{i+1} - v_i$ for some $i \in \{0, \ldots, l - 1\}$ with probability $\beta_1 / (\beta_1 + \beta_2)$. 
Thus, we have the following equation:
\begin{align*}
\E[\beta_1 \mathbf{1}_{B_1'} + \beta_2 \mathbf{1}_{B_2'}] =& \frac{\beta_2}{\beta_1 + \beta_2} \left( \beta_1 \left( \mathbf{1}_{B_1} + \frac{1}{l} \sum_{i = 0}^{l - 1} \left( \mathbf{1}_{v_i} - \mathbf{1}_{u_i} \right) \right)  + \beta_2  \mathbf{1}_{B_2} \right) \\
& + \frac{\beta_1}{\beta_1 + \beta_2} \left(  \beta_1 \mathbf{1}_{B_1} + \beta_2 \left( \mathbf{1}_{B_2} + \frac{1}{l} \sum_{i = 0}^{l - 1} \left( \mathbf{1}_{u_{i+1}} - \mathbf{1}_{v_i}  \right) \right) \right) \\
=& \beta_1 \mathbf{1}_{B_1} + \beta_2 \mathbf{1}_{B_2}. 
\end{align*}
%%where $\e_{u_i}$ and $\e_{v_i}$ denote the characteristic vectors of $u_i$ and $v_i$, respectively.
%%Then, we obtain the desired conclusion.
This completes the proof. 
\end{proof}

% {\bf TODO: Define the characteristic vector in Preliminary.}

% \begin{lemma} [{\cite[Lemma 6.2]{chekuri2010dependent}}] \label{lem:helpful_from_chekuri_et_al}
% Let $\tau \in \mathbb{N}$ and let $\X_t = \left(X_{1, t}, \ldots, X_{n, t}\right)$ for $t \in \{0, \ldots, \tau\}$ be a non-negative vector-valued random process with initial distribution given by $\X_0 = x \in \mathbb{R}^n$ with probability $1$,
% and satisfying the following properties:
% \begin{itemize}
%  \item $\E[\X_{t + 1} \mid \X_t] = \X_t$, for any $t \in \{0, \ldots, \tau - 1\}$.
%  \item $\X_{t + 1} - \X_t$ has at most one positive coordinate and at most one negative coordinate, for any $t \in \{0, \ldots, \tau - 1\}$.
% \end{itemize}
% Then, for any $t \in \{0, \ldots, \tau\}$, we have $\E[F(\X_t)] \geq F(x)$ for any function $F$ that is 
% a multilinear extension of some submodular function. 
% %%any submodular function $f$ and its multilinear extension $F$.
% \end{lemma}

\begin{lemma}[{\cite[Lemma VI.2]{chekuri2010dependent}}] \label{lem:helpful_from_chekuri_et_al}
Let $x \in \mathbb{R}^n_+$ be a non-negative vector and $\mathbf{X} = (X_1, \dots , X_n)$ be a non-negative vector-valued random variable satisfying the following properties:
\begin{itemize}
 \item $\E[\mathbf{X}] = x$, and 
 \item $\mathbf{X} - x$ has at most one positive coordinate and at most one negative coordinate. 
\end{itemize}
Then, we have $\E[F(\mathbf{X})] \geq F(x)$ for any function $F$ that is 
a multilinear extension of some submodular function. 
%%any submodular function $f$ and its multilinear extension $F$.
\end{lemma}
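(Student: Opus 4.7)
The plan is to establish the pointwise ``supporting hyperplane'' inequality
\[
F(\mathbf{X}(\omega)) \;\ge\; F(x) + \sum_{k=1}^{n} \frac{\partial F}{\partial x_k}(x)\cdot\bigl(X_k(\omega) - x_k\bigr)
\]
for every realization $\omega$. Once this is shown, taking expectations and invoking $\E[\mathbf{X}] = x$ makes the linear correction on the right vanish term-by-term, giving $\E[F(\mathbf{X})] \ge F(x)$ directly. This reduces the probabilistic statement to an outcome-by-outcome calculation about the multilinear extension.

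To prove the pointwise inequality, fix $\omega$ and let $i$ and $j$ be the (at most one of each) indices at which $\mathbf{X}(\omega)$ deviates from $x$: write $\epsilon_i = X_i(\omega) - x_i \ge 0$ and $\epsilon_j = X_j(\omega) - x_j \le 0$, while $X_k(\omega) = x_k$ for all other $k$. If zero or one coordinate deviates, the inequality holds as an equality because $F$ is linear in each single coordinate. In the remaining case, multilinearity says that when only the $i$-th and $j$-th coordinates vary (the others pinned at the corresponding values of $x$), $F$ is affine in each of $X_i$ and $X_j$ separately, so a short direct expansion of $F$ as $a + b X_i + c X_j + d X_i X_j$ yields
\[
F(\mathbf{X}(\omega)) - F(x) \;=\; \frac{\partial F}{\partial x_i}(x)\,\epsilon_i + \frac{\partial F}{\partial x_j}(x)\,\epsilon_j + \frac{\partial^2 F}{\partial x_i\,\partial x_j}(x)\,\epsilon_i\epsilon_j.
\]

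The final ingredient is the standard characterization that the multilinear extension of a submodular function satisfies $\frac{\partial^2 F}{\partial x_i\,\partial x_j} \le 0$ for every $i \neq j$ and at every evaluation point (equivalently, $f(S \cup \{i,j\}) - f(S \cup \{i\}) - f(S \cup \{j\}) + f(S) \le 0$ for every $S$ avoiding $\{i,j\}$). Since $\epsilon_i \ge 0$ and $\epsilon_j \le 0$, we have $\epsilon_i \epsilon_j \le 0$, so the quadratic remainder is non-negative and the pointwise inequality follows. The only mild subtlety is that the deviating indices $i,j$ themselves depend on $\omega$, but this causes no difficulty because the target inequality is written as a sum over all coordinates $k$, which automatically collapses to the (at most two) active indices thanks to $X_k(\omega) = x_k$ elsewhere. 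Hence the main obstacle is essentially the routine multilinear bookkeeping; no deeper idea is required beyond combining multilinearity with the sign structure imposed by the ``one positive, one negative'' hypothesis and $\E[\mathbf{X}] = x$.
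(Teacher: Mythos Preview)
Your argument is correct. The paper does not prove this lemma at all; it is quoted verbatim from Chekuri--Vondr\'{a}k--Zenklusen~\cite{chekuri2010dependent} as an external input, so there is no in-paper proof to compare against. That said, your approach---expanding the multilinear extension in the two active coordinates, using $\partial^2 F/\partial x_i\,\partial x_j \le 0$ together with the opposite signs of the two deviations to control the bilinear remainder, and then taking expectations to kill the linear part via $\E[\mathbf{X}]=x$---is exactly the standard proof of this fact (and is what appears in the cited source).
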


By combining these lemmas, we obtain the following proposition, 
which shows the validity of \texttt{\textup{UpdateWithCycle}}. 

\begin{proposition}
\label{prop:correctness}
Let $x = \sum_{i = 1}^{t} \beta_i \mathbf{1}_{B_i}$ be a point represented by 
a convex combination of the characteristic vectors of $t$ bases of a matroid $\M$. 
Suppose that 
the procedure \texttt{\textup{UpdateWithCycle}} updates $B_1$ and $B_2$ to $B_1'$ and $B_2'$ 
using a directed cycle in $D_{\M}(B_1, B_2)$. 
Let $B_i' = B_i$ for $i \in \{3, \dots , t\}$ and 
let $x' = \sum_{i = 1}^{t} \beta_i \mathbf{1}_{B'_i}$. 
Then, we obtain $\E[F(x')] \geq F(x)$ for any function $F$ that is 
a multilinear extension of some submodular function. 
\end{proposition}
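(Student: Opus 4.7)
The plan is to view $x'$ as a random vector $\mathbf{X}$ and apply Lemma~\ref{lem:helpful_from_chekuri_et_al} with $\mathbf{X} := x'$ and the mean vector $x$. To set this up I need to verify two hypotheses: (a) $x' \ge 0$ and $\mathbb{E}[x'] = x$, and (b) in every realization of the randomness inside \texttt{UpdateWithCycle}, the vector $x' - x$ has at most one positive coordinate and at most one negative coordinate.

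For (a), note first that $x'$ is a non-negative combination of characteristic vectors of bases, so $x' \ge 0$ pointwise. Since $B'_i = B_i$ for every $i \in \{3,\dots,t\}$, the difference telescopes to
\[
x' - x \;=\; \beta_1(\mathbf{1}_{B'_1} - \mathbf{1}_{B_1}) + \beta_2(\mathbf{1}_{B'_2} - \mathbf{1}_{B_2}).
\]
Taking expectations and invoking Lemma~\ref{lem:cycle_useful} immediately yields $\mathbb{E}[x'] = x$.

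For (b), I would do a short case analysis on the coin flip inside \texttt{UpdateWithCycle}. On a \texttt{Heads} outcome, $B'_2 = B_2$ and $B'_1 = B_1 + v_i - u_i$ for a single index $i$, so $x' - x = \beta_1(\mathbf{1}_{v_i} - \mathbf{1}_{u_i})$. On a \texttt{Tails} outcome, $B'_1 = B_1$ and $B'_2 = B_2 + u_{i+1} - v_i$ for a single index $i$, so $x' - x = \beta_2(\mathbf{1}_{u_{i+1}} - \mathbf{1}_{v_i})$. In either case there is exactly one coordinate on which $x'$ exceeds $x$ and exactly one on which it falls short, so the structural hypothesis of Lemma~\ref{lem:helpful_from_chekuri_et_al} is met. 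Applying that lemma then gives $\mathbb{E}[F(x')] \ge F(x)$, completing the proof.

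Beyond the two cited lemmas, the only conceptual point is the observation that, even though the directed cycle $C$ can be arbitrarily long, a single realization of \texttt{UpdateWithCycle} modifies only \emph{one} of $B_1$ and $B_2$, and only by a \emph{single} swap; hence the aggregate vector $\beta_1 \mathbf{1}_{B'_1} + \beta_2 \mathbf{1}_{B'_2}$ moves in at most two coordinates, which is precisely what keeps us inside the setting of Lemma~\ref{lem:helpful_from_chekuri_et_al}. I do not anticipate any serious obstacle: the proposition is essentially a packaging of Lemmas~\ref{lem:cycle_useful} and~\ref{lem:helpful_from_chekuri_et_al}, with the novelty of arbitrary cycle length absorbed by the random choice of index $i$ in Lemma~\ref{lem:cycle_useful}.
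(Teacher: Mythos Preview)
Your proposal is correct and follows essentially the same approach as the paper's proof: verify $\E[x']=x$ via Lemma~\ref{lem:cycle_useful}, observe that each realization of \texttt{UpdateWithCycle} changes exactly two coordinates (one up, one down), and then invoke Lemma~\ref{lem:helpful_from_chekuri_et_al}. The paper's version is simply terser, asserting the two-coordinate property without the explicit case split you provide.
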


\begin{proof}
%%By the definition of \texttt{\textup{UpdateWithCycle}}, 
It is obvious that $x'-x$ has at most one positive coordinate and at most one negative coordinate, 
since only two coordinate are involved in \texttt{\textup{UpdateWithCycle}}, and exactly one of them increases and the other decreases.
We also see that $\E[x'] = x$ holds by Lemma~\ref{lem:cycle_useful}. 
Therefore, Lemma~\ref{lem:helpful_from_chekuri_et_al} shows that 
$\E[F(x')] \geq F(x)$ for any function $F$ that is 
a multilinear extension of some submodular function. 
\end{proof}

\subsection{Whole Algorithm}
%\subsection{Whole Algorithm and Proof of Theorem~\ref{thm:fast_swap_rounding}}
\label{subsec:wholealgorithm}

We now prove Theorem~\ref{thm:fast_swap_rounding} by 
giving our fast swap rounding algorithm. 
See \texttt{FastMergeBases} (Algorithm~\ref{alg:fastmergebases}) for the pseudocode of our algorithm.

\begin{proof}[Proof of Theorem \ref{thm:fast_swap_rounding}]
Suppose that $x = \sum_{i = 1}^{t} \beta_i \mathbf{1}_{B_i}$ is a point represented by 
a convex combination of the characteristic vectors of $t$ bases of a matroid $\M$. 
We pick up two bases, say $B_1$ and $B_2$, in the representation and merge them into a single basis in the following way: 
until $B_1$ and $B_2$ coincide, 
we find a directed cycle $C$ in $D_{\M}(B_1, B_2)$ using Proposition~\ref{propf:findcycle}, and update $B_1$ and $B_2$ 
by \texttt{\textup{UpdateWithCycle}} using $C$.  
Our algorithm repeats this process $t-1$ times so that 
all the bases are merged into a single basis. 

Since the correctness of this algorithm is shown by Proposition~\ref{prop:correctness}, 
it remains to analyze the independence query complexity of this rounding algorithm.

For merging two bases into a single basis, 
since we apply Proposition~\ref{propf:findcycle} at most $r$ times, we require  
$O(r^{3/2} \log^{3/2}(rt))$ independence oracle queries with probability at least $1 - r^{-1} t^{-2}$.
Furthermore, since we apply this procedure $t - 1$ times in our swap rounding algorithm, 
the entire algorithm requires 
$O(t r^{3/2} \log^{3/2}(rt))$ independence oracle queries with probability at least $1 - (rt)^{-1}$.
This completes the proof of Theorem~\ref{thm:fast_swap_rounding}. 
%
%
% the number of \texttt{FastMergeBases} calls is $t - 1$ in our swap rounding algorith.
% Then, by taking the union bound over the whole run of our swap rounding algorithm, we obtain that our swap rounding algorithm uses $O(r t \sqrt{r} \log^{3/2}(rt))$ independence oracle queries with probability at least $1 - (rt)^{-1}$, which completes the proof.
%
% It remains to analyze the independence query complexity of our swap rounding algorithm.
% The number of iterations of the while loop is $|B_1 \setminus B_2|$, which is $O(r)$.
% By Lemma \ref{lem:merge_bases_query_complexity},
% we take the union bound over over all iterations of the outer while loop and then we obtain that
% the procedure \texttt{FastMergeBases} uses $O(r \sqrt{r} \log^{3/2}(rt))$ independence oracle queries with probability at least $1 - (rt)^{-2}$.
% Furthermore, the number of \texttt{FastMergeBases} calls is $t - 1$ in our swap rounding algorith.
% Then, by taking the union bound over the whole run of our swap rounding algorithm, we obtain that our swap rounding algorithm uses $O(r t \sqrt{r} \log^{3/2}(rt))$ independence oracle queries with probability at least $1 - (rt)^{-1}$, which completes the proof.
% It remains to remains to analyze the number of independence oracle queries used in the whole algorithm.
% The expected sizes of $L$ and $R$ are bounded by $O(2 \sqrt{r \log r})$.
% Then, by the Chernoff bound, the sizes of The sizes of $L$ and $R$ are bounded by $O(2 \sqrt{r \log r})$.
\end{proof}

% In our implementation of the merge bases algorithm, 
% we find a directed cycle in the directed graph $D_{\M}(B_1, B_2)$ or 

% By the strong basis exchange property, for any $v \in B_1 \setminus B_2$, there exists at least one element $u \in B_2 \setminus B_1$ such that $(v, u) \in E_1(B_1, B_2)$ and $(u, v) \in E_2(B_1, B_2)$.
% Similarly, for any $u \in B_2 \setminus B_1$, there exists at least one element $v \in B_1 \setminus B_2$ such that $(v, u) \in E_1(B_1, B_2)$ and $(u, v) \in E_2(B_1, B_2)$.

% \begin{lemma}
% For any $v \in $
% \end{lemma}

\begin{algorithm}
%\begin{algorithm}[H]
    \While {$B_1 \neq B_2$} {
        % \If{$|B_1 \setminus B_2| \leq 2 \sqrt{r \log r}$} {
        %     Pick arbitrary $u \in B_1 \setminus B_2$ \\
        %     Find $v \in B_2 \setminus B_1$ such that $B_1 + v - u \in \I$ and $B_2 + u - v \in \I$ \\
        %     \texttt{UpdateViaStrongBasisExchange}$(\beta_1, B_1, \beta_2, B_2, u, v)$ \\
        % }
        % \Else {
            % Sample sets $L$ and  $R$ of $2 \sqrt{r \log r}$ elements drawn uniformly and independently from $B_1 \setminus B_2$ and $B_2 \setminus B_1$ (without replacement), respectively. \\
            % Let $L$ be a random subset of $B_1 \setminus B_2$ such that each element included independently with probability $\displaystyle \min \left\{ 1, \frac{2 \sqrt{r \log (rt)}}{|B_1 \setminus B_2|} \right\} $ \\
            % Let $R$ be a random subset of $B_2 \setminus B_1$ such that each element included independently with probability $\displaystyle \min \left\{1, \frac{2 \sqrt{r \log (rt)}}{|B_2 \setminus B_1|} \right\}$ \\
            Sample a set $L$ of $2 \sqrt{r \log (rt)}$ elements drawn uniformly and independently from $B_1 \setminus B_2$ with replacement. \\
            Sample a set $R$ of $2 \sqrt{r \log (rt)}$ elements drawn uniformly and independently from $B_2 \setminus B_1$ with replacement. \\
            $a \gets \emptyset$ \\
            % \If{$L \cup R = \emptyset$} {
            %     Pick arbitrary $u \in B_1 \setminus B_2$ \\
            %     $a  \gets u$ \\
            % }
            $E \gets \emptyset$ \\
            \For{$u \in L$} {
                $v \gets \texttt{FindExchangeElement}(\M, B_2, u, R)$ \label{line:fastmergebases_query1} \\
                \If{$v = \emptyset$} {
                    $a \gets u$
                }
                \Else {
                    $E \gets E \cup \{(u, v)\}$
                }
            }
            \For{$v \in R$} { \label{line:fast_mergebases1}
                $u \gets \texttt{FindExchangeElement}(\M, B_1, v, L)$ \label{line:fastmergebases_query2} \\
                \If{$u = \emptyset$} {
                    $a \gets v$
                }
                \Else {
                    $E \gets E \cup \{(v, u)\}$
                }
            }

            \If{$a = \emptyset$} {
                Find a directed cycle $C$ in the bipartite directed graph $(L \cup R, E)$ \\
                \texttt{UpdateWithCycle}$(\beta_1, B_1, \beta_2, B_2, C)$ \\
            } \Else {
                % \If{$a = \emptyset$} {
                %     Pick arbitrary $u \in B_1 \setminus B_2$ \\
                %     Find $v \in B_2 \setminus B_1$ such that $B_1 + v - u \in \I$ and $B_2 + u - v \in \I$ \\
                %     \texttt{UpdateViaStrongBasisExchange}$(\beta_1, B_1, \beta_2, B_2, u, v)$ \\
                %     \Continue
                % }

                \If{$a \in B_1 \setminus B_2$} {
                    $A \gets \emptyset$ \\
                    \While{$v = $ \textup{\texttt{FindExchangeElement}}$(\M, B_2, a, B_2 \setminus (B_1 \cup A))$ satisfies $v \neq \emptyset$} {
                        \label{line:fastmergebases_query3}
                        \If{$B_1 + v - a \in \I$} { \label{line:fastmergebases_query5}
                            \texttt{UpdateViaStrongBasisExchange}$(\beta_1, B_1, \beta_2, B_2, v, a)$ \\
                            \Break
                        }
                        $A \gets A + v$ \\
                    }
                } \Else {
                    $A \gets \emptyset$ \\
                    \While{$u = $ \textup{\texttt{FindExchangeElement}}$(\M, B_1, a, B_1 \setminus (B_2 \cup A))$ satisfies $u \neq \emptyset$} {
                        \label{line:fastmergebases_query4}
                        \If{$B_2 + u - a \in \I$} { \label{line:fastmergebases_query6}
                            \texttt{UpdateViaStrongBasisExchange}$(\beta_1, B_1, \beta_2, B_2, a, u)$ \\
                            \Break
                        }
                        $A \gets A + u$ \\
                    }
                }
            }
            % \If{$a = \emptyset$} {
            %     Find a directed cycle $C$ in $E$ \label{line:find_cycle} \\
            %     \texttt{UpdateWithCycle}$(\beta_1, B_1, \beta_2, B_2, C)$ \\
            % } 
            % \Else {

            % }
        % }
    } 
    \Return $B_1$
    \caption{\texttt{FastMergeBases}$(\beta_1, B_1, \beta_2, B_2)$}\label{alg:fastmergebases}
\end{algorithm}

We can remove the condition `with probability at least $1 - (rt)^{-1}$' 
by losing a sufficiently small approximation factor $\varepsilon >0$. 
That is, we obtain Theorem~\ref{thm:main_rounding}, which we restate here. 

\rounding*

% %\begin{corollary} \label{cor:fast_swap_rounding}
% \begin{theorem} \label{thm:fast_swap_rounding}
% For any $\varepsilon > 0$, there is a randomized algorithm satisfying the following conditions: 
% \begin{itemize}
%     \item the input consists of a matroid $\M = (V, \I)$ given as an independence oracle and 
%     a vector $x \in B(\M)$ represented as a convex combination of $t$ bases, 
%     \item the output is a basis $S$ of $\M$ such that $\E[f(S)] \geq (1 - \varepsilon) F(x)$ for any submodular function $f\colon 2^V \to \mathbb{R}$ and its multilinear extension $F$, and 
%     \item it uses $O(r^{3/2}t \log^{3/2}(\frac{r t}{\varepsilon}) )$ independence oracle queries.  
% \end{itemize}
% % There is a randomized algorithm that takes as input a fractional solution $x$ represented as a covex combination of $t$ bases and rounds it to an integral solution $S$ such that $\E[f(S)] \geq (1 - \epsilon) F(x)$.
% % This algorithm uses $\displaystyle O\left(r^{3/2}t \log^{3/2}\left(\frac{rt}{\epsilon}\right)\right)$ independence oracle queries.
% \end{theorem}

\begin{proof}
Recall that the algorithm in Theorem~\ref{thm:fast_swap_rounding} (Algorithm~\ref{alg:fastmergebases})
uses $O(r^{3/2}t \log^{3/2}(r t))$ independence oracle queries with probability at least $1 - (rt)^{-1}$.
If Algorithm~\ref{alg:fastmergebases} returns a basis using $O(r^{3/2}t \log^{3/2}(r t))$ independence oracle queries, then 
we say that it {\em succeeds}. Otherwise, we say that it {\em fails}. 
By a slight modification, when the algorithm fails, 
we suppose that it uses $O(r^{3/2}t \log^{3/2}(r t))$ independence oracle queries
and terminates without returning a basis. 
This modified algorithm is denoted by Algorithm~\ref{alg:fastmergebases}'. 
Note that Algorithm~\ref{alg:fastmergebases}' fails with probability at most $(rt)^{-1}$. 

% Let $q := \max \left\{ \log_{(rt)^{-1}} \varepsilon , 1 \right\} = \max \left\{ \frac{\log (1/\varepsilon)}{\log rt}, 1 \right\} = O \left( \frac{\log (rt/\varepsilon)}{\log rt} \right)$. 
Let $q := \lceil \log_{(rt)^{-1}} \varepsilon \rceil = \lceil \frac{\log (1/\varepsilon)}{\log rt} \rceil = O \left( \frac{\log (rt/\varepsilon)}{\log rt} \right)$. 
In our algorithm, 
we run Algorithm~\ref{alg:fastmergebases}' $q$ times. 
If at least one execution of Algorithm~\ref{alg:fastmergebases}' succeeds, then 
our algorithm returns a basis that is obtained in the first successful execution of Algorithm~\ref{alg:fastmergebases}'. 
If all the executions of Algorithm~\ref{alg:fastmergebases}' fail, then our algorithm returns an arbitrary basis. 
Then, we use $O(r^{3/2}t \log^{3/2}(\frac{r t}{\varepsilon}))$ independence oracle queries in total. 
Furthermore, the probability that 
all the executions of Algorithm~\ref{alg:fastmergebases}' fail 
%%with probability 
is at most $(rt)^{-q} \le \varepsilon$. 
Therefore, the output $S$ satisfies 
$\E[f(S)] \geq (1 - \varepsilon) F(x)$
for any submodular function $f$ and its multilinear extension $F$. 
This completes the proof. 
\end{proof}

%%\subsection{Fast Implementation with Dynamic Independence Oracle} \label{subsec:fast_impl_dynamic_oracle}

% \section{Fast Implementation with Dynamic Independence Oracle} \label{subsec:fast_impl_dynamic_oracle}

% % [{\cite[Corollary C.1]{blikstad2023fast}}]
% \begin{lemma} \label{lem:binary_search_tree}
% There exists a data structure that supports the following operations.

% \begin{itemize}
%  \item  {\sc initialize}$(\M, )$:
%  \item  {\sc find}$(u)$: Givn 
% \end{itemize}

% \end{lemma}

\section{Submodular Maximization with Rank Oracle} \label{sec:other_oracle}

% \subsection{Rank Oracle} \label{subsec:rank_oracle}

In this section, we present a fast submodular maximization algorithm in the rank oracle model 
and prove Theorem~\ref{rank_submodular_max}.
In the rank oracle setting, 
the input consists of 
a monotone submodular set function $f \colon 2^V \to \mathbb{R}_+$ given as a value oracle, and 
a matroid $\M = (V, \I)$ given as a rank oracle. 
The objective is to maximize $f(S)$ subject to $S \in \I$.  
We restate Theorem~\ref{rank_submodular_max} here. 

\ranksubmodularmax*

In the same way as the independence oracle setting, 
our algorithm is based on continuous relaxation and rounding technique. 
% Since the rank oracle is more powerful than the independence oracle, 
% Theorem~\ref{thm:main_rounding} implies that 
% we can implement our rounding algorithm using $\TO(r^{3/2})$ value and rank oracle queries. 
% Therefore, to obtain Theorem~\ref{rank_submodular_max}, 
% it suffices to give an algorithm for solving the continuous relaxation problem 
% that uses $\TO(n)$ value and oracle queries. 

%By simply combining the binary search technique proposed by Nguy$\tilde{{\hat{\text{e}}}}$n~\cite{nguyen2019note} and Chakrabarty-Lee-Sidford-Singla-Wong~\cite{chakrabarty2019faster} and a framework of Ene-Nguy$\tilde{{\hat{\text{e}}}}$n~\cite{ene2018towards} to solve the continuous optimization problem $\max_{x \in P(\M)} F(x)$ for some special matroids, we obtain a fast implementation of the continuous optimization problem using rank oracle.

\paragraph*{Algorithm for the Continuous Relaxation Problem.}
Let $F$ be the multilinear extension of $f$ and let $P(\M)$ be the matroid polytope of $\M$.
Ene-Nguy$\tilde{{\hat{\text{e}}}}$n~\cite{ene2018towards} presented a framework  to solve the continuous optimization problem $\max_{x \in P(\M)} F(x)$ in near-linear time for several important classes of matroids.
% A key ingredient in their algorithm is a fast dynamic data structure for maintaining an approximate maximum weight basis of the matroid.
In their algorithm, 
%%of solving the continuous optimization problem, 
they use a data structure for maintaining a maximum weight basis of the matroid, where each element has a weight and the weights are updated.
In each update, the weight of exactly one element decreases, while all the other weights do not change.    
%%It is guaranteed that the updates can only decrease the weights.
The data structure supports an operation that decreases the weight of an element and updates the current basis to a maximum weight basis with respect to the updated weights.
This operation is called the {\em maximum weight basis data structure operation}. 
With this terminology, their result is stated as follows. 

% decrease the weight of $v \in B$

% the new basis contains at most one element not in the old basis
% \begin{itemize}
% \item {\sc UpdateBase}($v, W$): decrease the weight of $v$ 
% \end{itemize}
% An independent set data structure
% \begin{itemize}
% \item {\sc Insert}($v$): Given $v \in V$ such that $S + v \in \I$, add $v$ to $S$. 
% \item {\sc Check}($v$): Given $v \in V$, determine whether $S + v \in \I$.
% \end{itemize}

%%\begin{lemma}[{\cite[Lemmas 8 and 9]{ene2018towards}}] \label{submodular_max_lemma_rank}
\begin{lemma}[{follows from Lemmas 8 and 9 in the arXiv version of \cite{ene2018towards}}] \label{submodular_max_lemma_rank}
Given a non-negative monotone submodular function $f \colon 2^{V} \rightarrow \mathbb{R}_+$, a matroid $\M = (V, \I)$ of rank $r$, and a parameter $\varepsilon > 0$,
there is a randomized algorithm satisfying the following conditions:
\begin{itemize}
\item the algorithm finds a point $x \in P(\M)$ represented as a convex combination of $1/\varepsilon$ bases 
such that $\E[F(x)] \geq (1 - 1/e - \varepsilon) \cdot \max \{ f(T) \mid T \in \I \}$, where $F \colon [0, 1]^V \rightarrow \mathbb{R}_+$ is the multilinear extension of $f$, 
%%\item this point $x$ is represented as a convex combination of $1/\varepsilon$ bases,
\item it uses $\displaystyle O\left( \frac{n}{ \varepsilon^5} \log^2 \left(\frac{n}{\varepsilon} \right) \right)$ value oracle queries,
\item it uses $\displaystyle O \left( \frac{n}{\varepsilon} \log \left( \frac{n}{\varepsilon} \right) \right)$ independence oracle queries, and
\item it uses $\displaystyle O \left( \frac{r}{\varepsilon} \log^2 \left( \frac{n}{\varepsilon} \right) \right)$ maximum weight basis data structure operations.
\end{itemize}
\end{lemma}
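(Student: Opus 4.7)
The plan is to derive the lemma from Lemmas~8 and~9 of the arXiv version of~\cite{ene2018towards}, which together formalize a fast continuous-greedy framework that decouples the matroid access from the rest of the algorithm by routing it through a maximum-weight-basis data structure supporting weight-decrease updates. The derivation has three ingredients: (i) a thresholded, discretized continuous greedy loop; (ii) a submodularity-based monotonicity argument bounding the number of data structure operations; and (iii) sampling-based estimation of marginal gains.

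First, I would describe the loop. Over $T = 1/\varepsilon$ rounds, a fractional point $x^{(t)} \in P(\M)$ is maintained, starting from $x^{(0)} = 0$. In round $t$, for each element $v$ one estimates the marginal gain $w_v^{(t)} \approx F(x^{(t)} + \tfrac{1}{T}\mathbf{1}_v) - F(x^{(t)})$ by averaging value-oracle evaluations of $f$ on independent samples of $R(x^{(t)})$, rounds $w_v^{(t)}$ down into one of $O(\varepsilon^{-1} \log(n/\varepsilon))$ geometric buckets, queries the data structure for a maximum-weight basis $B^{(t)}$ under these bucketed weights, and sets $x^{(t+1)} := x^{(t)} + \tfrac{1}{T}\mathbf{1}_{B^{(t)}}$. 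The point $x := x^{(T)}$ is a convex combination of $T = 1/\varepsilon$ bases, and the standard continuous-greedy analysis — with an additional multiplicative $(1+\varepsilon)$ slack to absorb the bucketing error and the sampling error — yields $\E[F(x)] \geq (1 - 1/e - \varepsilon)\cdot\max_{T \in \I} f(T)$.

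Next, I would bound the number of data structure operations. Submodularity implies that for every $v$ the marginal $F(y + \mathbf{1}_v) - F(y)$ is non-increasing in $y$ coordinate-wise, so each element's exact marginal gain, and hence its bucketed weight, can only decrease along the sequence $x^{(0)}, x^{(1)}, \ldots, x^{(T)}$. Consequently the total number of bucket-decrease events, summed over all elements and all rounds, is at most $O(n \log(n/\varepsilon))$, and each such event corresponds to exactly one weight-decrease call to the data structure. Together with $O(r/\varepsilon)$ operations for extracting the current basis round by round, this yields the claimed bound of $O(\varepsilon^{-1} r \log^2(n/\varepsilon))$ after absorbing polylog factors from re-bucketing.

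Finally, I would tally the oracle queries. Achieving multiplicative $(1\pm\varepsilon)$ accuracy on every marginal-gain estimate with failure probability $\mathrm{poly}(\varepsilon/n)$ requires $O(\varepsilon^{-2}\log(n/\varepsilon))$ independent samples of $R(x^{(t)})$ per estimated element, and each sample costs $O(1)$ value-oracle calls. The monotonicity of bucketed weights allows a lazy policy in which marginal gains are re-estimated only for elements whose bucket has changed since the previous round, so the amortized value-query count across all $T = 1/\varepsilon$ rounds matches the stated $O(n \varepsilon^{-5}\log^2(n/\varepsilon))$. Independence queries enter only in the initialization and in the sampling/book-keeping supporting the data structure interface, giving $O(n\varepsilon^{-1}\log(n/\varepsilon))$. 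The main technical obstacle will be to verify that this lazy re-estimation policy is consistent with the approximation analysis: this requires pairing the monotonicity observation with a sufficiently tight concentration bound for the sampled marginal-gain estimator, along the lines of~\cite{badanidiyuru2014fast}, so that stale estimates remain $(1\pm\varepsilon)$-valid precisely on the rounds where no bucket change occurs.
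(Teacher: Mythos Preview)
The paper does not give its own proof of this lemma; it simply attributes it to Lemmas~8 and~9 of the arXiv version of Ene--Nguy$\tilde{{\hat{\text{e}}}}$n~\cite{ene2018towards}. Your sketch captures the right architecture (discretized continuous greedy, geometric bucketing of marginals, lazy re-estimation justified by submodularity), and that is indeed the framework of~\cite{ene2018towards}.

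There is, however, a real gap in your accounting for the maximum-weight-basis data-structure operations. You bound the total number of bucket-decrease events by $O(n\log(n/\varepsilon))$ --- already this seems to drop a factor, since with $O(\varepsilon^{-1}\log(n/\varepsilon))$ geometric buckets each of the $n$ elements may descend through all of them --- and then assert that this, together with $O(r/\varepsilon)$ basis extractions, ``yields the claimed bound of $O(\varepsilon^{-1} r\log^2(n/\varepsilon))$.'' That implication does not hold: your count is in terms of $n$, the target bound is in terms of $r$, and nothing in your argument converts one into the other. The $r$-dependence is precisely the content of the lemma. In Ene--Nguy$\tilde{{\hat{\text{e}}}}$n's analysis it comes from a more refined lazy policy in which a weight-decrease operation is issued only for an element that the data structure is currently holding in (or returning as a candidate for) its maintained basis; a charging argument then ties the number of such calls to the $O(r/\varepsilon)$ basis slots across the $1/\varepsilon$ rounds, rather than to all $n$ ground-set elements. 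Your description (``each such event corresponds to exactly one weight-decrease call'') treats every bucket change of every element as a data-structure call, which is the naive policy and gives only an $\tilde O_\varepsilon(n)$ bound. That weaker bound would in fact still suffice for the downstream application in Theorem~\ref{rank_submodular_max}, but it does not establish the lemma as stated.
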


\paragraph*{Maximum Weight Basis Data Structure Operation.}

% By simply combining the binary search technique proposed by Nguy$\tilde{{\hat{\text{e}}}}$n~\cite{nguyen2019note} and Chakrabarty-Lee-Sidford-Singla-Wong~\cite{chakrabarty2019faster} and a framework of Ene-Nguy$\tilde{{\hat{\text{e}}}}$n~\cite{ene2018towards} to solve the continuous optimization problem $\max_{x \in P(\M)} F(x)$ for some special matroids, we obtain a fast implementation of the continuous optimization problem using rank oracle.
% By combining this with our fast swap rounding algorithm given in Theorem \ref{thm:main_rounding}, we can obtain the following theorem.

% \begin{theorem} \label{rank_submodular_max}
% For any $\varepsilon > 0$, there is an algorithm that achieves $(1 - 1/e - \varepsilon)$--approximation for maximizing a monotone submodular function subject to a matroid constraint and uses $O((n + r^{3/2}) \text{ \textup{poly}}(1/\varepsilon, \log n))$ value and rank oracle queries.
% \end{theorem}

To implement a maximum weight basis data structure operation by using rank oracle queries efficiently, 
we use the following lemma obtained by the binary search technique of Nguy$\tilde{{\hat{\text{e}}}}$n~\cite{nguyen2019note} and Chakrabarty et al.~\cite{chakrabarty2019faster}; 
see also {\cite[Lemma 2]{u2022subquadratic}}. 

%%Tu~\cite{u2022subquadratic}).

%\begin{lemma}[{\cite[Lemma 10]{chakrabarty2019faster}} and {\cite[Lemma 2]{u2022subquadratic}}] \label{lem:binary_search_rank} 
\begin{lemma}[{\cite[Lemma 10]{chakrabarty2019faster}}; see also {\cite[Lemma 2]{u2022subquadratic}} and \cite{blikstad2021breaking_STOC}] \label{lem:binary_search_rank} 
There is an algorithm \textup{\texttt{FindFreeElement}} which, 
given a matroid $\M = (V, \mathcal{I})$, a weight function $w \colon V \rightarrow \mathbb{R}$, an independent set $S \in \mathcal{I}$, and $T \subseteq V \setminus S$, finds an element $u \in T$ maximizing $w(u)$ such that $S + u \in \mathcal{I}$ or otherwise determines that no such element exists, and uses $O(\log |T|)$ rank oracle queries.
\end{lemma}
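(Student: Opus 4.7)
The plan is to reduce the task to a monotone binary search that uses a single rank query per probe. First, sort $T = \{t_1, t_2, \ldots, t_k\}$ in decreasing order of weight so that $w(t_1) \ge w(t_2) \ge \cdots \ge w(t_k)$. This sorting step is performed purely by weight comparisons and so costs no rank oracle queries; only comparisons among the values $w(t_i)$ are needed. For each $i \in \{0, 1, \ldots, k\}$, define $T_i := \{t_1, \ldots, t_i\}$, with $T_0 := \emptyset$.

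The key observation I would establish is that, for any $i$, there exists an element $u \in T_i$ with $S + u \in \I$ if and only if $\mathrm{rank}_{\M}(S \cup T_i) > |S|$. Since $S$ is independent, $S + u \in \I$ is equivalent to $u$ lying outside the span of $S$ in $\M$; and an element of $T_i$ lies outside the span of $S$ if and only if the rank of $S \cup T_i$ strictly exceeds $|S|$. In particular, the function $i \mapsto \mathrm{rank}_{\M}(S \cup T_i)$ is monotonically non-decreasing, so the set of indices $i$ for which the rank exceeds $|S|$ is an upward-closed interval $\{i^*, i^*+1, \ldots, k\}$ (or empty).

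Next, I would binary search for this threshold $i^*$, namely the smallest index such that $\mathrm{rank}_{\M}(S \cup T_{i^*}) > |S|$. Each probe of the binary search issues one rank oracle query, so the total cost is $O(\log |T|)$ rank queries. If no such index exists (which is detected by the initial query $\mathrm{rank}_{\M}(S \cup T) = |S|$), the algorithm reports that no augmenting element exists. Otherwise, I claim the answer is $t_{i^*}$. For correctness, by the minimality of $i^*$ we have $\mathrm{rank}_{\M}(S \cup T_{i^*-1}) = |S|$ and $\mathrm{rank}_{\M}(S \cup T_{i^*}) = |S|+1$, which forces $t_{i^*} \notin \mathrm{span}_{\M}(S)$ and hence $S + t_{i^*} \in \I$. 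Conversely, every $t_j$ with $j < i^*$ satisfies $t_j \in \mathrm{span}_{\M}(S)$ and therefore cannot augment $S$; since $T$ was sorted by decreasing weight, every augmenting element in $T$ has weight at most $w(t_{i^*})$, so $t_{i^*}$ is weight-maximum among augmenting elements.

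The step requiring most care is the equivalence linking the existence of an augmenting element in $T_i$ with the strict rank inequality $\mathrm{rank}_{\M}(S \cup T_i) > |S|$; this is what guarantees the monotonicity driving the binary search. Once that equivalence is in hand, the remaining analysis is routine, and the rank oracle query count is exactly the number of probes in a standard binary search on $\{0, 1, \ldots, k\}$, namely $O(\log |T|)$.
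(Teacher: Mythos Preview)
Your proof is correct. Note, however, that the paper does not actually supply its own proof of this lemma: it is stated with a citation to \cite{chakrabarty2019faster} (see also \cite{u2022subquadratic,blikstad2021breaking_STOC}) and used as a black box. Your argument---sorting $T$ by weight and binary-searching for the first prefix $T_{i^*}$ with $\mathrm{rank}_{\M}(S\cup T_{i^*})>|S|$---is precisely the standard proof given in those references, so there is nothing to contrast.
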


% given weight function $w \colon V \rightarrow \mathbb{R}$
% maximizing $w(u)$ such that $S + u - v \in \I$

% By using Lemma \ref{lem:binary_search_rank}, we can easily perform a maximum weight data structure operation using $\tO(1)$ rank queries.

% a maximum weight basis $B$
% the new basis contains at most one element not in the old basis
% by Lemma \ref{lem:binary_search_rank}, we can find such an element if one exists 

This lemma shows that the maximum weight basis data structure operation can be easily implemented in the rank oracle model as follows. 

\begin{lemma}
\label{lem:implementdatastructure}
Let $\M = (V, \mathcal{I})$ be a matroid given as a rank oracle and
let $w \colon V \rightarrow \mathbb{R}$ be a weight function.  
Let $w' \colon V \rightarrow \mathbb{R}$ be a weight function 
such that $w'(v) < w(v)$ for some $v \in V$ and $w'(u) = w(u)$ for any $u \in V - v$. 
Given a maximum weight basis $B$ of $\M$ with respect to $w$, 
we can compute a maximum weight basis $B'$ of $\M$ with respect to $w'$
using $\tO(1)$ rank oracle queries. 
\end{lemma}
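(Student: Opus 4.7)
The plan is to reduce the update to a single invocation of \texttt{FindFreeElement} from Lemma~\ref{lem:binary_search_rank}, via a case split on whether the modified element $v$ lies in the current basis $B$. If $v \notin B$, I would return $B' := B$ with no oracle queries, justified by the observation that $w'(B'') \leq w(B'') \leq w(B) = w'(B)$ for every basis $B''$, where the final equality uses $v \notin B$.

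If $v \in B$, the plan is to invoke \texttt{FindFreeElement} with $S := B - v$, $T := V \setminus B$, and weight function $w$ (which coincides with $w'$ on $T$) to obtain the element $u^* \in V \setminus B$ of maximum $w$-weight satisfying $B - v + u^* \in \mathcal{I}$, or to certify that no such element exists. Since $S$ is independent and $T \subseteq V \setminus S$, the hypotheses of Lemma~\ref{lem:binary_search_rank} are met, and the call consumes $O(\log |T|) = O(\log n) = \tO(1)$ rank oracle queries. I would then set $B' := B - v + u^*$ if $w(u^*) > w'(v)$, and $B' := B$ otherwise (treating the non-existence case as $w(u^*) = -\infty$).

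Correctness in this case is the main thing to verify. For a candidate basis $B''$ with $v \in B''$, a direct arithmetic comparison using $w$-optimality of $B$ gives $w'(B'') - w'(B) = w(B'') - w(B) \leq 0$. For $B''$ with $v \notin B''$, the strong basis exchange property applied to $v \in B \setminus B''$ produces some $u' \in B'' \setminus B$ with $B - v + u' \in \mathcal{I}$ and $B'' - u' + v \in \mathcal{I}$; maximality of $u^*$ then gives $w(u') \leq w(u^*)$, and $w$-optimality of $B$ applied to the basis $B'' - u' + v$ yields $w(B'') \leq w(B) - w(v) + w(u')$. Combining these two bounds shows $w'(B'') \leq w(B) - w(v) + w(u^*)$, whose right-hand side equals $w'(B')$ in the swap case and is bounded by $w'(B)$ in the no-swap case (precisely because $w(u^*) \leq w'(v)$ is the no-swap condition), completing the argument.

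The main obstacle is essentially nonexistent at the structural level, since the lemma boils down to the classical fact that a single weight decrease shifts the maximum-weight basis by at most one exchange; the only care required is the bookkeeping between $w$ and $w'$ when comparing $B$, $B'$, and $B''$, and verifying that the non-existence branch of \texttt{FindFreeElement} (which occurs precisely when $v$ is a coloop) is handled correctly by returning $B$. The genuinely algorithmic ingredient is Lemma~\ref{lem:binary_search_rank}, which already delivers the necessary exchange in logarithmically many rank queries.
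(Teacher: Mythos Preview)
Your proof is correct and follows essentially the same approach as the paper: reduce to one call of \texttt{FindFreeElement} on $S=B-v$. The only cosmetic difference is that the paper takes $T=(V\setminus B)\cup\{v\}$ with weight $w'$, so that $v$ itself competes inside the call (and the output is always $B'=B-v+u$, possibly with $u=v$), whereas you take $T=V\setminus B$ and compare the returned $u^*$ against $w'(v)$ afterwards; the paper also defers correctness to citations, while you spell out the exchange argument in full.
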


\begin{proof}
If $v \not\in B$, then $B' := B$ is a desired basis, because $w'(v) < w(v)$. 
Otherwise, we apply 
\texttt{FindFreeElement} with the weight function $w'$
in which $S = B- v$ and $T = (V \setminus B) \cup \{ v \}$. 
Let $u$ be the element found by the procedure (possibly, $u=v$). 
%%\texttt{FindFreeElement}$(\M, w', B - v, (V \setminus B) \cup \{ v \})$, 
Then our algorithm returns a basis $B' := B - v + u$, 
which is a maximum weight basis with respect to $w'$ 
(see e.g.,~\cite[Lemma 3.1]{hommelsheim2023recoverable} and  
Section 6 of the arXiv version of \cite{blikstad2023fast}).
By Lemma \ref{lem:binary_search_rank}, this algorithm requires $\tO(1)$ rank oracle queries.
\end{proof}

% Note that a similar argument is used 
% in Section 6 in the arXiv version of \cite{blikstad2023fast}.

\paragraph*{Putting Them Together (Proof of Theorem \ref{rank_submodular_max}).}

We now prove Theorem \ref{rank_submodular_max}. 
Lemma~\ref{lem:implementdatastructure} shows that
we can execute the maximum weight basis data structure operation
using $\tO(1)$ rank oracle queries
without a sophisticated data structure. 
Hence, by Lemma \ref{submodular_max_lemma_rank}, 
we can solve the continuous optimization problem $\max_{x \in P(\M)} F(x)$ using $\TO(n)$ value and rank oracle queries, 
where we note that the rank oracle is more powerful than the independence oracle. 

For the obtained point $x$, 
we apply our fast rounding algorithm given in Theorem \ref{thm:main_rounding} to obtain an integral solution. 
Note again that the rank oracle is more powerful than the independence oracle, 
and hence this rounding algorithm requires $\TO(r^{3/2})$ value and rank oracle queries. 

By replacing $\varepsilon$ with $\varepsilon / 2$ in the same way as in the proof of Theorem \ref{main_submodular_max}, 
we obtain a $(1 - 1/e - \varepsilon)$-approximation algorithm that uses $\TO(n + r^{3/2})$ value and rank oracle queries, which completes the proof.
%
%%, which we restate here. 
%
%
%\begin{proof}[Proof of Theorem \ref{rank_submodular_max}]
%%\begin{proof}
%
%The data structure supports the operation that decreases the weight of an element and updates the current basis to a maximum weight basis with respect to the updated weights.
%
% To perform a maximum weight data structure operation, we use the procedure \texttt{FindFreeElement} given in Lemma \ref{lem:binary_search_rank}. 
% If we decrease $w(v)$, then the new maximum weight basis contains at most one element not in the old maximum weight basis.
% To update the basis, we apply \texttt{FindFreeElement}$(\M, w, B - v, (V \setminus B) \cup \{ v \})$, where $B$ is the current basis.
% Let $u$ be the element that the procedure \texttt{FindFreeElement} finds.
% Then we update the current basis $B$ to $B - v + u$.
% Therefore, by Lemma \ref{lem:binary_search_rank}, we can perform a maximum weight data structure operation using $\tO(1)$ rank oracle queries.
%
%Now, by Lemma \ref{submodular_max_lemma_rank}, the continuous optimization problem $\max_{x \in P(\M)} F(x)$ can be solved in $\TO(n)$ value and rank oracle queries.
%Then by applying our fast swap rounding algorithm given in Theorem \ref{thm:main_rounding}, we obtain a $(1 - 1/e - \varepsilon)$--approximation algorithm that uses $\TO(n + r^{3/2})$ value and rank oracle queries, which completes the proof.
\qed
%\end{proof}

% \input{dynamic_rank_oracle.tex}

% \section{Concluding Remarks} \label{sec:conclusion}

%%\newpage

\bibliography{biblio}

\end{document}